\crefname{section}{\S}{\S\S}
\Crefname{section}{\S}{\S\S}
\crefname{figure}{Figure}{Figures}
\Crefname{figure}{Figure}{Figures}
\newcommand{\name}{AT2\xspace}
\newcommand{\namesh}{AT2$_{SM}$\xspace}
\newcommand{\namemp}{AT2$_{MP}$\xspace}
\newcommand{\namedet}{AT2$_{\mathsf{D}}$\xspace}
\newcommand{\nameprob}{AT2$_{\mathsf{P}}$\xspace}
\newcommand{\operation}[2]{$\textit{#1}#2$}
\newcommand{\myparagraph}[1]{\vspace{3pt}\noindent \textbf{#1}}
\def\Nat{\mathbb{N}}
\def\to{\rightarrow}
\def\transfer{\textit{transfer}}
\def\read{\textit{read}}
\def\true{\textit{true}}
\def\false{\textit{false}}
\def\lf{\tiny}
\def\nnll{\refstepcounter{linenumber}{\lf\thelinenumber}}
\newcommand{\commentline}[1]{\hspace{1cm}\{ \textit{#1} \}}
\newcommand{\commentsimple}[1]{\{ \textit{#1} \}}
\newcounter{linenumber}
\newcommand{\A}{\mathcal A}
\newcommand{\M}{\mathcal M}
\newcommand{\V}{\mathcal V}
\newcommand{\kcons}{$k$-consensus}
\newcommand{\kasset}{$k$-shared asset transfer}
\newcommand{\ignore}[1]{}
\newcommand{\cmt}[2]{}
    \definecolor{palette-0}{RGB}{27,158,119}
    \definecolor{palette-1}{RGB}{217,95,2}
    \definecolor{palette-2}{RGB}{117,112,179}
    \definecolor{palette-3}{RGB}{231,41,138}
    \definecolor{palette-4}{RGB}{102,166,30}
    \definecolor{palette-5}{RGB}{230,171,2}
    \definecolor{palette-6}{RGB}{166,118,29}
    \definecolor{palette-7}{RGB}{102,102,102}
\newcommand{\broadcast}{probabilistic broadcast} % The command \Gossip was not available.
\newcommand{\Broadcast}{Probabilistic broadcast}
\newcommand{\broadcastabstraction}{ProbabilisticBroadcast}
\newcommand{\broadcastinstance}{pb}
\newcommand{\singleshot}{probabilistic consistent broadcast}
\newcommand{\Singleshot}{Probabilistic consistent broadcast}
\newcommand{\singleshotabstraction}{ProbabilisticConsistentBroadcast}
\newcommand{\singleshotinstance}{pcb}
\newcommand{\multishot}{probabilistic secure broadcast}
\newcommand{\Multishot}{Probabilistic secure broadcast}
\newcommand{\multishotabstraction}{ProbabilisticSecureBroadcast}
\newcommand{\multishotinstance}{psb}
\newcommand{\erg}{\texttt{Erdös}-\texttt{Rényi Gossip}}
\newcommand{\pde}{\texttt{Probabilistic Double}-\texttt{Echo}}
\newcommand{\tfayto}{\texttt{Sequenced Probabilistic Double}-\texttt{Echo}}
\newcommand{\pebbling}{\texttt{Threshold Contagion}}
\newcommand{\event}[3]{\langle #1.\textrm{#2} \mid #3\rangle}
\newcommand{\sevent}[2]{\langle #1.\textrm{#2} \rangle}
\algnewcommand\Instance[2]{\State #1, \textbf{instance} #2}
\algnewcommand\Trigger[3]{\State \textbf{trigger} $\event{#1}{#2}{#3}$}
\algnewcommand\sTrigger[2]{\State \textbf{trigger} $\event{#1}{#2}$}
\theoremstyle{definition}
    \newtheorem{definition}{Definition}
\theoremstyle{plain}
    \newtheorem{lemma}{Lemma}
    \newtheorem{theorem}{Theorem}
\newcommand{\rp}[1]{{\left(#1\right)}}
\newcommand{\abs}[1]{{\left|#1\right|}}
\begin{document}

\title{\name: Asynchronous Trustworthy Transfers}

\author{\vspace{-.15cm}Rachid Guerraoui}
\affiliation{
  \institution{EPFL}
}

\author{\vspace{-.15cm}Petr Kuznetsov}
\affiliation{
  \institution{LTCI, T\'el\'ecom ParisTech, University Paris-Saclay}
}

\author{\vspace{-.15cm}Matteo Monti}
\affiliation{
 \institution{EPFL}
}
\author{\vspace{-.15cm}Matej Pavlovic}
\affiliation{
  \institution{EPFL}
}
\author{\vspace{-.15cm}Dragos-Adrian Seredinschi}
\affiliation{
  \institution{EPFL}
}

% table of contents name
\renewcommand{\contentsname}{}
\renewcommand\shortauthors{GKMPS'18}

\maketitle

%!TEX root = ../main.tex

% \begin{abstract}
% \centerline{\normalsize\textsc{Abstract}}
% \section*{Abstract}
{\large\textbf{ABSTRACT }}
Many blockchain-based protocols, such as Bitcoin, implement a decentralized asset transfer system.
As clearly stated in the original paper by Nakamoto, the crux of this problem lies in prohibiting any participant from engaging in \emph{double-spending}.
There seems to be a common belief that consensus is necessary for solving the double-spending problem.
Indeed, whether it is for a permissionless or a permissioned environment, the typical solution uses consensus to build a totally ordered ledger of submitted transfers.

In this paper we show that this common belief is false:
consensus is not needed to implement a decentralized asset transfer system.
We do so by introducing \name (Asynchronous Trustworthy Transfers), a class of consensusless algorithms.

To show formally that consensus is unnecessary for asset transfers, we first consider this problem in the shared-memory context.
We introduce \namesh, a wait-free algorithm that asynchronously implements asset transfer in the \emph{read-write} shared-memory model.
In other words, we show that the \emph{consensus number} of an asset-transfer object is \emph{one}.

In the message passing model with Byzantine faults, we introduce a generic \emph{asynchronous} algorithm called \namemp and discuss two instantiations of this solution.
First, \namedet ensures \emph{deterministic} guarantees and consequently targets a small scale deployment (tens to hundreds of nodes), typically for a private, i.e, permissioned, environment.
Second, \nameprob provides \emph{probabilistic} guarantees and scales well to a very large system size (tens of thousands of nodes), ensuring logarithmic latency and communication complexity.
Instead of consensus, we construct \namedet and \nameprob on top of a broadcast primitive with causal ordering guarantees offering deterministic and probabilistic properties, respectively.
% This primitive is secure (i.e., prevents Byzantine nodes from tampering with messages) and
%on top of a weaker primitive, called \emph{selective causal broadcast},
% Briefly, this primitive is both simpler and faster than consensus-based total order.

Whether for the deterministic or probabilistic model, our \name algorithms are both simpler and faster than solutions based on consensus.
In systems of up to $100$ replicas, regardless of system size, \namedet outperforms consensus-based solutions offering a throughput improvement ranging from $1.5x$ to $6x$, while achieving a decrease in latency of up to $2x$.
(Not shown in this version of the document.)
\nameprob obtains \emph{sub-second} transfer execution on a global scale deployment of thousands of nodes.
% We conduct experiments on the wide-area network and compare our \emph{\name} implementation with a payment system based on a state-of-the-art consensus algorithm, using systems of up to $100$ replicas.
% We outperform the consensus-based solution regardless of the system size.
% We achieve a throughput improvement ranging from $1.5x$ to $6x$, while reaching up to $2x$ lower latency.
%\needsrev{achieves a throughput that is higher by X\%}.
% Our main contribution is to show that, fundamentally, consensus is not necessary for building a decentralized payment system.

% In this paper we revisit the problem of preventing double-spending.

% \footnote{This is an abstract footnote}
% \end{abstract}

\newpage
\setcounter{tocdepth}{1}
\tableofcontents

\newpage
%!TEX root = ../main.tex

\section{Introduction}
\label{sec:intro}

In $2008$, Satoshi Nakamoto introduced the Bitcoin protocol, implementing an electronic asset transfer system (often called a cryptocurrency) without any central authority~\cite{nakamotobitcoin}.
Since then, many alternatives to Bitcoin came to prominence, designed for either the \emph{permissionless} (public) or \emph{permissioned} (private) setting.
These include major cryptocurrencies such as Ethereum~\cite{ethereum} or Ripple~\cite{rapoport2014ripple}, as well as systems sparked from research or industry efforts such as Bitcoin-NG~\cite{eyal16bitcoinng}, Algorand~\cite{gilad2017algorand}, ByzCoin~\cite{kogi16byzcoin}, Stellar~\cite{mazieres2015stellar}, Hyperledger~\cite{hyperledger}, Corda~\cite{he16corda}, or Solida~\cite{abr16solida}.
Each of these alternatives brings novel approaches to implementing decentralized transfers, and may offer a more general interface in the form of smart contracts~\cite{sz97smartcontr}.
They improve over Bitcoin in various aspects, such as performance, scalability, energy-efficiency, or security.

A common theme in these protocols, whether they are for transfers~\cite{kok18omniledger} or smart contracts~\cite{ethereum},
is that they seek to implement a \emph{blockchain}.
This is a distributed ledger where all the transfers in the system are totally ordered.
Achieving total order among multiple inputs (e.g., transfers) is fundamentally a hard task, equivalent to solving \emph{consensus}~\cite{FLP85,HT93}.

Consensus is a central problem in distributed computing, known for its notorious difficulty.
Consensus has no deterministic solution in asynchronous systems
if just a single participant can fail~\cite{FLP85}.
Algorithms for solving consensus are tricky to implement correctly~\cite{abrah17revisiting,cac17blwild,clement09making}, and they face tough trade-offs between performance, security, and energy-efficiency~\cite{antoni18smr,ber89optimal,gue18blockchain,vuko15quest}.

As stated in the original paper by Nakamoto, the main problem of a decentralized cryptocurrency is preventing a malicious participant from spending the same money more than once~\cite{nakamotobitcoin}.
This is known as the \emph{double-spending} attack.
Bitcoin and follow-up systems typically assume that total order---and thus consensus---is vital to preventing double-spending~\cite{gar15backbone}.
Indeed, there seems to be a common belief that solving consensus is necessary for implementing a decentralized asset transfer system~\cite{BonneauMCNKF15,gue18blockchain,kar18vegvisir,nakamotobitcoin}.

Our main contribution in this paper is showing that this common belief is false.
We show that total order is not required to avoid double-spending in decentralized transfer systems.
We do so by introducing \name (Asynchronous Trustworthy Transfers), a class of consensusless algorithms.

% We then leverage this result to build the first decentralized transfer system that sidesteps consensus.
As a starting point, we consider the shared memory model with benign failures.
In this model, we give a precise definition of a transfer system as a sequential object type.
It is for pedagogical purposes that we start from this model, as it allows us to study the relation between the transfer object and consensus, and later extend our result to the message passing model.
% , through the notion of a \emph{consensus number}~\cite{Her91}.
% Informally, the consensus number of an object type is the maximal system size in which that object is \emph{universal}, i.e., can be used to implement any other object type.
% The resulting \emph{consensus hierarchy} gives a classification of object types by their relative power~\cite{Her91}.
%
%% <<<<<<< HEAD
%% We present \namesh, a read-write wait-free asset-transfer implementation, whcih implies that the asset-transfer object  has consensus number \emph{one}~\cite{Her91}.
%% %[[PK too wordy?
%% %In other words, decentralized asset transfer does not need consensus in its implementation, and we can avoid maintaining a total order across transfers.
%% %]]
%% =======
We introduce \namesh, an asset transfer algorithm that has consensus number \emph{one}~\cite{Her91}.
In other words, decentralized asset transfer does not need consensus in its implementation, and we can avoid maintaining a total order across transfers.
%% >>>>>>> Addressed some of the comments
% This means that \namesh does not need consensus in its implementation, and can avoid maintaining a total order across all transfers.
% We show that the transfer system type is at the very bottom of the hierarchy.

% \needsrev{@H. Attiya: insist that primary goal is prevent double spending and not implementing smart contracts. Mention that it is for \emph{pedagogical} purpose that we start from a benign system model.}

To get an intuition why total order is not necessary, consider a set of users
who transfer money between their accounts in a decentralized manner.
For simplicity, assume the full replication model, where every user maintains a copy of the state of every account.
We observe that most operations in a transfer system \emph{commute}, i.e., different users can apply the operations in arbitrary order, resulting in the same final state.
For instance, a transfer $T_1$ from Alice to Bob commutes with a transfer $T_2$ from Carol to Drake.
This is because the two transfers involve different accounts.
In the absence of other transfers, $T_1$ and $T_2$ can be applied in different orders by different users, without affecting correctness.

Consider now a more interesting case when two transfers involve the same account.
For example, let us throw into the mix a transfer $T_0$ from Alice to Carol.
Assume that Alice issues $T_0$ before she issues $T_1$.
Note that $T_0$ and $T_1$ do not commute, because they involve the same account---that of Alice---and
it is possible that she cannot fulfill both $T_0$ and $T_1$ (due to insufficient balance).
We say that $T_1$ depends on $T_0$, and so $T_0$ should be applied before $T_1$.

Furthermore, suppose that Carol does not have enough money in her account to fulfill $T_2$ before she receives transfer $T_0$ from Alice.
In this case, transfer $T_2$ depends on $T_0$.
Thus, all users should apply $T_0$ before applying $T_2$, while transfers $T_1$ and $T_2$ still commute.
The partial ordering among these three transfers is in fact given by a \emph{causality} relationship.

In~\Cref{fig:scenario-intro} we show the scenario with these three transfers and the causality relation between them.
Intuitively, the general ordering constraint we seek to enforce is that every outgoing transfer for an account causally depends on all preceding transfers involving that---and only that---account.

\begin{figure}[h!]
\includegraphics[width=0.4\columnwidth]{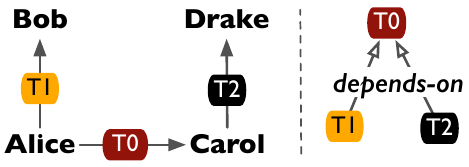}
\caption{A simple scenario illustrating, on the left side, three transfers among four participants.
On the right side, we depict the dependencies among these three transfers.
The dependencies are established by a causality-based ordering relationship.}
\label{fig:scenario-intro}
\end{figure}

% Fortunately, causality establishes an ordering which is significantly weaker than total order.

We then extend our result to the model of Byzantine fault-prone processes that communicate via passing messages.
In this model, we show how to sidestep consensus by presenting \namemp, a generic algorithm that implements decentralized asset transfer atop a variant of \emph{secure broadcast}.
We describe two variants of this generic solution.
The first,  called {\namedet},  has \emph{deterministic} guarantees that targets a smaller scale deployment (tens to hundreds of nodes) typically for a private environment.
The second, called {\nameprob}, ensures probabilistic guarantees and scales well to very large system sizes, typically for a public setting.

It is well-known that the main bottleneck in blockchain-based systems is their consensus module~\cite{he16corda,sou18byzantine,vuko15quest}.
Numerous solutions have emerged to alleviate this problem~\cite{gol18sbft,he16corda}.
Typical techniques seek to employ a form of sharding~\cite{kok18omniledger}, for instance, or to use a committee-based optimization~\cite{eyal16bitcoinng,gilad2017algorand}.
With \name, we circumvent this main bottleneck, yielding new solutions that bypass consensus altogether.
% The main takeaway from our result is that consensus is unnecessary, yielding new solutions that bypass consensus altogether.

% \needsrev{We introduce \emph{\name} (Asynchronous transfer System), a class of consensusless transfer algorithms.}
% We provide two solutions: \emph{\namedet}---for a deterministic system model---and \emph{\nameprob}---for a probabilistic system model.
% Intuitively, a consensusless solution should outperform a consensus-based system.
% We compare \name{} with a consensus-based transfer system that uses BFT-Smart~\cite{bess14state}, a state-of-the-art state machine replication (SMR) protocol.
% Both \name{} and BFT-Smart assume that less than one third of participants are Byzantine.

% In our performance evaluation, we conduct experiments on a wide-area network using systems of up to $100$ replicas.
% Depending on system size, we achieve a throughput improvement ranging from $1.5x$ to $6x$, while reaching up to $2x$ lower latency.
% We note that \name{} is significantly simpler than consensus-based systems and that it does not depend on synchrony assumptions for progress.
% Our permissioned solution can be extended to the large-scale permissionless setting~(see~\Cref{sec:sys-related}).
% This is an environment where avoiding consensus can be of even greater importance.

Whether for the deterministic or probabilistic model, our \namemp\ algorithms are both simpler and faster than solutions based on consensus.
In systems of up to $100$ replicas, regardless of system size, \namedet\ outperforms consensus-based solutions offering a throughput improvement ranging from $1.5x$ to $6x$, while achieving a decrease in latency of up to $2x$.
\nameprob\ obtains \emph{sub-second} transfer execution on a global scale deployment of thousands of nodes.
\cmt{RG}{I believe it is crucial to talk about number of lines of code (order of magnitude less) and say that this is crucial for proving and testing the system. 
You might also say that our protocols are all built on solid foundations and proven correct}
\cmt{YAP}{not shown in this version... it would be great to mention the complexity instead!}

The rest of this paper is organized as follows.
We first define the asset transfer object type in the crash-stop shared memory model and show that it has consensus number one for accounts with a single owner~(\cref{sec:result}).
We then move to the message passing model with Byzantine failures and present \namemp, a generic algorithm for implementing distributed asset transfer on top of a secure broadcast primitive~(\cref{sec:algo}).
Next, we focus on one deterministic and one probabilistic version of the secure broadcast primitive, yielding, respectively, \namedet~(\cref{sec:algo-det}) and \nameprob~(\cref{sec:algo-prob}), our deterministic and probabilistic flavors of \name\ for the Byzantine message passing model.
At the end, we revisit the crash-stop shared memory model to prove that a general asset transfer object has consensus number $k$ if an account is shared by $k$ (but not more) processes (\cref{sec:k-consensus}).
Finally, we discuss related work (\cref{sec:related-work}) and conclude (\cref{sec:conclusions}).
Insights behind the analyses of our algorithms can be found in the appendices.

% In~\Cref{sec:algo} we present \name, our consensusless payment algorithm.
% We then experimentally compare the performance of \name\ against a consensus-based system in~\Cref{sec:eval}.
% We discuss various aspects of the design of our system in~\Cref{sec:discussion}, including possible extensions of our results.
% In~\Cref{sec:related-work} we present related work, and finally we conclude in~\Cref{sec:conclusions}.
% For a formal analysis of our result, including detailed definitions and proofs, we refer to the optional~\Cref{app:consnum,app:byz,app:secure-broadcast,app:proof}.

%!TEX root = ../main.tex

\section{Consensusless Asset Transfer With \namesh}
\label{sec:result}

In this section we formally define the asset transfer problem and discuss its consensus number.
We begin with presenting the shared memory model we use in this section~(\cref{sec:def-shm-model}) and then precisely define the problem of \textsf{asset-transfer} as a sequential object type~(\cref{sec:def-sequential}).

Intuitively, an \textsf{asset-transfer} object consists of accounts whose balances can be read by all processes.
Processes are also allowed to transfer assets between accounts,
where each account is associated with a subset of processes (owners) that are allowed to issue transfers debiting this account.

Finally, we prove that an object of this type where each account has a single owner has consensus number \emph{one}.
We do so by proposing a wait-free implementation of this object called \namesh ~(\Cref{sec:avoiding-consensus}).
The interested reader is referred to \cref{sec:k-consensus} that generalizes this result, proving that an object containing an account with $k$ owners has consensus number $k$.

\subsection{Shared Memory Model Definitions}
\label{sec:def-shm-model}

\myparagraph{Processes.}
As our basic system model, we assume a set $\Pi$ of $N$ asynchronous processes
that communicate by invoking atomic operations on shared memory objects.
Processes are sequential---we assume that a process never invokes a new operation before obtaining a response from a previous one.

\myparagraph{Object types.} 
A sequential object type is defined as a tuple
$T=(Q,q_0,O,R,\Delta)$, where $Q$ is a set of states, $q_0\in Q$ is an
initial state, $O$ is a set of operations, $R$ is a set of responses and
$\Delta\subseteq Q\times\Pi\times O \times Q \times R$ is a relation
that associates a state, a process identifier and an operation to a set of
possible new states and corresponding responses.
Here we assume that $\Delta$ is total on the first three elements,
i.e., for each state $q\in Q$, each process $p\in \Pi$, and each operation $o\in O$, some
transition to a new state is defined, i.e., $\exists q'\in Q,\; r\in R$: $(q,p,o,q',r)\in \Delta$.

A \emph{history} is a sequence of invocations and responses, each
invocation or response associated with a process identifier.
A
\emph{sequential history} is a history that starts with an invocation and in
which every invocation is immediately followed with a response
associated with the same process.    
A sequential history $(j_1,o_1),(j_1,r_1),(j_2,o_2),(j_2,r_2),\ldots$, where $\forall i\ge
1, j_i\in\Pi,\; o_i\in O,\; r_i\in R$, is \emph{legal} with respect to type 
$T=(Q,q_0,O,R,\Delta)$ if there exists a
sequence $q_1,q_2,\ldots$ of states in $Q$ such that
$\forall i\geq 1$, $(q_{i-1},j_i,o_i,q_i,r_i)\in \Delta$. 

\myparagraph{Implementations.} 
An \emph{implementation} of an object type $T$ is a distributed algorithm that,
for each process and invoked operation, prescribes the actions that the process needs to
take to perform it.
An \emph{execution} of an implementation is a sequence of
\emph{events}: invocations and responses of operations, \emph{send}
and \emph{receive} events, or atomic accesses to shared abstractions. The sequence of events at every process
must respect the algorithm assigned to it. 

\myparagraph{Failures.}
Processes are subject to \emph{crash} failures. A process may halt prematurely, in which case we say that the process is \emph{crashed}.
A process is called \emph{faulty} if it crashes during the execution. 
A process is \emph{correct} if it is not faulty.
All algorithms we present in the shared memory model are \emph{wait-free}---every correct process eventually returns from each operation it invokes,
regardless of an arbitrary number of other processes crashing.

\myparagraph{Linearizability and sequential consistency.}
For each pattern of operation invocations, the execution produces a
\emph{history}, i.e., the sequence of distinct invocations and responses,
labelled with process identifiers and unique sequence numbers.

A projection of a history $H$ to process $p$, denoted $H|p$ is the
subsequence of elements of $H$ labelled with $p$. 
An invocation $o$ by a process $p$ is \emph{incomplete} in $H$ if it is not followed by a
response in $H|p$.   
A history is \emph{complete} if it has no incomplete invocations.
A \emph{completion} of $H$ is a history $\bar H$ that is identical to
$H$ except that every incomplete invocation in $H$ is either removed or
\emph{completed} by inserting a matching response somewhere after it.  

A \emph{sequentially consistent} implementation of $T$ ensures that for every
history $H$ it produces, there exists a completion $\bar H$ and a
legal sequential history $S$ such that for all processes $p$, $\bar H|p=S|p$.

A \emph{linearizable}  implementation, additionally, preserves the
real-time order between operations.
Formally, an invocation $o_1,r_1$ \emph{precedes} an
invocation $o_2$ in $H$, denoted  $o_1\prec_H o_2$, if $o_1$ is
complete and the corresponding response $r_1$ precedes $o_2$ in $H$.
Note that $\prec_H$ stipulates a partial order on invocations in $H$.
A linearizable implementation of $T$ ensures that for every history $H$ it produces, there exists a completion $\bar H$ and a
legal sequential history $S$ such that (1)~for all processes $p$, $\bar H|p=S|p$ and (2)~$\prec_H\subseteq\prec_S$.

A (sequentially consistent or linearizable) implementation is
\emph{$t$-resilient} if, under the assumption that at most $t$
processes crash, it ensures that every invocation performed by a
correct process is eventually followed by a response.
In the special case when $t=n-1$, we say that the implementation is
\emph{wait-free}. 

\cmt{YAP}{
This definition is restricted to 1 input account, bitcoin and other currencies support transactions with multiple input accounts (multisig transactions).
For a constant number of input accounts everything in the analysis works out, without increasing the consensus number...
(multisig transactions are used in many smart contracts)
We could mention this in the extension section.
}
\cmt{<--MP}{
This would be a little bit trickier than in Bitcoin, as each partial transfer would need to be ordered separately with respect to each source account.
But I believe it is possible to implement some equivalent ``all or nothing'' semantics \name\ as well.
}

\subsection{Asset transfer type}
\label{sec:def-sequential}

Let $\A$ be a set of \emph{accounts} and $\mu: \A \rightarrow
2^{\Pi}$ be an ``owner'' map that associates each account with a set
of processes that are, intuitively, allowed to debit the account.  
The \textsf{asset-transfer} object type
associated with $\A$ and $\mu$
is then defined as a tuple $(Q,q_0,O,R,\Delta)$, where:

\begin{itemize}

\item The set of states $Q$ is the set of all possible maps
  $q:\;\A\to\Nat$. Intuitively, each state of the object assigns
  each account its \emph{balance}.   

\item The initialization map  $q_0:\;\A\to\Nat$ assigns the initial
  balance to each account.

\item Operations and responses of  the type are defined as
  $O=\{\transfer(a,b,x):\; a,b\in\A,\,x\in\Nat\}\cup\{\textit{read}(a):\;a\in\A\}$
  and $R=\{\true,\false\}\cup\Nat$.
  \cmt{YAP}{extend to multi account transactions here or in Sec. 6}

 \item For a state $q\in Q$, a proces $p\in\Pi$, an operation $o\in O$, a response
   $r\in R$ and a new state $q'\in Q$, the tuple $(q,p,o,q',r)\in\Delta$
   if and only if one of the following conditions is satisfied:

  \begin{itemize}

     \item $o=\transfer(a,b,x) \wedge p\in\mu(a)$ $\wedge$
       $q(a)\geq x$ $\wedge$ $q'(a)=q(a)-x$ $\wedge$ $q'(b)=q(b)+x \wedge \forall c \in{\A}\setminus\{a, b\}: q'(c) = q(c)$ (all other
       accounts unchanged) $\wedge$ $r=\true$;

     \item $o=\transfer(a,b,x)$ $\wedge$  ($p\notin \mu(a)$
       $\vee$ $q(a)< x$) $\wedge$ $q'=q$ $\wedge$ $r=\false$;
       
     \item $o=\textit{read}(a)$ $\wedge$ $q=q'$ $\wedge$ $r=q(a)$. 

   \end{itemize} 

  \end{itemize}

In other words, operation $\transfer(a,b,x)$ invoked by
process $p$ \emph{succeeds} if and only if $p$ is the owner of the
source account $a$ and 
account $a$  has enough balance, and if it does, $x$ is transferred
from $a$ to the destination account $b$.
A $\transfer(a,b,x)$ operation is called \emph{outgoing} for
$a$ and \emph{incoming} for $b$; respectively, the $x$ units are called   \emph{outgoing} for
$a$ and \emph{incoming} for $b$. 
A transfer is \emph{successful} if its corresponding response is \emph{true} and \emph{failed} if its corresponding response is \emph{false}.
Operation $\textit{read}(a)$ simply returns the balance of $a$ and
leaves the account balances untouched.   

\subsection{Asset Transfer Is Easier than Consensus}
\label{sec:avoiding-consensus}

In this section, we discuss the ``synchronization power'' of the \textsf{asset-transfer} type in what we believe to be the most typical use case:
each account being associated with a single owner process.
We show that such an \textsf{asset-transfer} object type can be implemented in a \emph{wait-free} manner using only read-write registers.
Thus, the type has consensus number $1$.
In \cref{sec:k-consensus} we generalize our result and show that a ``$k$-shared'' \textsf{asset-transfer} object (where up to $k$ processes may share an account) has consensus number $k$.

For now, consider an \textsf{asset-transfer} object associated with a set of accounts $\A$ and
an ownership map $\mu$ such that $\forall a\in\A$, $|\mu(a)|\leq 1$.
\cmt{YAP}{when does equality make sense?}
\cmt{<--MP}{In fact, only equality makes sense here, in which case an account is only debited by one owner. Zero makes less sense, as this would correspond to an account that no process can withdraw from. But technically this is possible.}
We now present \namesh, our algorithm that implements this object in the read-write shared-memory model.  

Our implementation is described in Figure~\ref{fig:waitfree}. 
%% The idea is simple.
The $N$ processes share an atomic snapshot object~\cite{AADGMS93} of
size $N$.
Every process $p$ is associated with a distinct location in the
atomic snapshot object
storing the set of all successful {\transfer} operations
executed by $p$ so far.
Since each account is owned by at most one process, all outgoing transfers for an account appear in a single location of the atomic snapshot (associated with the owner process).

Recall that the atomic snapshot (AS) memory is represented as a vector of $N$ 
shared variables
that can be accessed with two atomic operations: \emph{update} and 
\emph{snapshot}. An \emph{update} operation  
modifies the value at a given position of the vector and a
\emph{snapshot} returns the state of the whole vector.
We implement the \textit{read} and \textit{transfer} operations as follows.

\begin{itemize}
\item
To read the balance of an account $a$, the process simply takes a snapshot $S$ and
returns the initial balance plus the sum of incoming amounts minus the sum of all outgoing
amounts. We  denote this number by $\textit{balance}(a,S)$.
As we argue below, the result is guaranteed to be non-negative,
i.e., the operation is correct with respect to the type specification.
\item
To perform $\transfer(a,b,x)$, a process $p$, the owner of $a$, takes a snapshot $S$ and computes $\textit{balance}(a,S)$. %, i.e., the balance of $a$ based on all operations in $S$ that \emph{concern} $a$ (transfer assets from or to $a$).
If the amount to be transferred does not exceed $\textit{balance}(a,S)$,
we add the transfer operation to the set of $p$'s operations in the snapshot object via an \textit{update} operation and return $\true$.
Otherwise, the operation returns $\false$.
\end{itemize}

\begin{figure}[tbp]
\hrule \vspace{1mm}
 {\small
\begin{tabbing}
 bbb\=bb\=bb\=bb\=bb\=bb\=bb\=bb \=  \kill
Shared variables: \\
\> $AS$, atomic snapshot, initially $\{\bot\}^N$\\
\\
Local variables: \\
\> $\textit{ops}_p\subseteq \A\times\A\times\Nat$, initially $\emptyset$\\
\\
Upon \transfer$(a,b,x)$\\
\nnll\label{line1:tsf:snapshot}\> $S := AS.\textit{snapshot}()$\\ 
\nnll\> \textbf{if} $p \notin \mu(a) \vee \textit{balance}(a,S)< x$ \textbf{then}\\
\nnll\label{line1:response-false}\>\> \textbf{return} {\false}\\ 
\nnll\> $\textit{ops}_p := \textit{ops}_p \cup \{(a,b,x)\}$\\
\nnll\label{line1:tsf:update}\> $AS.\textit{update}(\textit{ops}_p)$\\
\nnll\label{line1:response-true}\>\textbf{return} {\true}\\ 
\\
Upon \textit{read}$(a)$\\
\nnll\label{line1:read:snapshot}\> $S := AS.\textit{snapshot}()$\\ 
\nnll\label{line1:read:response}\> \textbf{return} $\textit{balance}(a,S)$\\

%\vspace{-2mm} 
\end{tabbing}
 }
 \hrule
\caption{\namesh: Wait-free implementation of \textsf{asset-transfer} with one owner per account: code for process $p$}
\label{fig:waitfree}
\end{figure}

\begin{theorem}
  \label{th:waitfree}
  The \textsf{asset-transfer} object with a single owner per account type has a wait-free implementation in the
  read-write shared memory model.
\end{theorem}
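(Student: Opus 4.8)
The plan is to observe that the implementation of \Cref{fig:waitfree} is already a complete algorithm built solely on top of an atomic snapshot object, so the proof reduces to two independent claims: (i) the implementation is wait-free, and (ii) it is linearizable with respect to the \textsf{asset-transfer} type. For wait-freedom, I would note that the atomic snapshot abstraction is itself wait-free implementable from single-writer read-write registers~\cite{AADGMS93}, and that the code for every operation invokes a constant number of \textit{snapshot} and \textit{update} calls with no loops and no waiting. Hence every correct process returns from \transfer\ and \read\ after finitely many of its own steps, regardless of how many other processes crash, which is exactly wait-freedom.

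The bulk of the work is linearizability. First I would fix linearization points: a \read$(a)$ and a failed \transfer\ linearize at their \textit{snapshot} step (lines~\ref{line1:read:snapshot} and~\ref{line1:tsf:snapshot}), while a successful \transfer\ linearizes at its \textit{update} step (line~\ref{line1:tsf:update}). Since these are all atomic operations on the single shared atomic snapshot object, they are totally ordered by that object's own linearization, which induces a total order $S$ on the \textsf{asset-transfer} operations. Each linearization point lies inside the execution interval of its operation, so $\prec_H\,\subseteq\,\prec_S$ is immediate, and $\bar H \mid p = S \mid p$ holds because processes are sequential.

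It then remains to show that $S$ is legal, i.e., that every successful transfer meets the precondition $q(a)\ge x$ in the sequential execution and that each \read\ returns the matching balance. The key invariant is that for every account $a$ and every snapshot $S'$ taken in the execution, the outgoing transfers from $a$ recorded in $S'$ never sum to more than the initial balance plus the incoming transfers recorded in that same $S'$, so that $\textit{balance}(a,S')\ge 0$. This is where the single-owner hypothesis $|\mu(a)|\le 1$ does all the work: every debit of $a$ is issued by the unique owner $p$, who is sequential, so these debits are totally ordered; and because $p$ writes its accumulated $\textit{ops}_p$ back into its own slot before returning (line~\ref{line1:tsf:update}) and the snapshot it reads for the next transfer strictly follows that completed update, $p$'s validation check $\textit{balance}(a,S')\ge x$ is computed against a snapshot that already reflects all of $p$'s previously committed outgoing transfers. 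Incoming transfers, by contrast, can only be \emph{under}-counted in a stale snapshot, never over-counted, so the check is conservative and can never admit an overdraft.

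I would finish with an induction on the linearization order $S$, maintaining that the sequential-state balance of every account equals $\textit{balance}(a,\cdot)$ computed over the operations linearized so far: each successful \transfer\ preserves non-negativity by the conservativeness argument above, each failed \transfer\ leaves the state unchanged (matching the \false\ clause of $\Delta$), and each \read\ returns the current balance — consistent because any transfer absent from the read's snapshot necessarily linearizes after it. The main obstacle is precisely this legality step, and within it the delicate point is reconciling the two sources of staleness: a snapshot may miss recent \emph{incoming} credits (safe, since this only understates a balance) but it must never miss the owner's own recent \emph{debits} (guaranteed by sequentiality of the owner together with the self-inclusion property of atomic snapshots). Everything else is routine bookkeeping.
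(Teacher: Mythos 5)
Your proposal is correct and takes essentially the same approach as the paper's own proof: the same wait-freedom argument via wait-free atomic snapshots, the same linearization points (the \textit{snapshot} step for reads and failed transfers, the \textit{update} step for successful transfers), and the same key single-owner insight—that the owner's snapshot necessarily contains \emph{all} prior outgoing transfers on its account while possibly undercounting incoming ones, so the balance check is conservative—followed by the same induction along the linearization order to establish legality. The only detail the paper makes explicit that you gloss over is the completion of the history (a \transfer\ whose \textit{update} has linearized but which has not yet returned is completed with response \true), but this is routine bookkeeping rather than a gap.
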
  
\begin{proof}
  Fix an execution $E$ of the algorithm in Figure~\ref{fig:waitfree}.
   Atomic snapshots can be wait-free implemented in the read-write
   shared memory model~\cite{AADGMS93}. 
  As every operation only involves a finite number of atomic snapshot
  accesses, every process completes each of the operations it invokes
  in a finite number of its own steps.
  
Let $\textit{Ops}$ be the set of:
\begin{itemize}
\item All invocations of \transfer\ or \emph{read} in $E$ that returned, and
\item All invocations of \transfer\ in $E$ that completed the \emph{update} operation (line~\ref{line1:tsf:update}) (the atomic snapshot operation has been linearized).
\end{itemize}

Let $H$ be the history of $E$. 
We define a completion of $H$ and, for each $o\in\textit{Ops}$, we define a linearization point as follows:

\begin{itemize}

  \item If $o$ is a {\read} operation, it linearizes at the
    linearization point of the \emph{snapshot} operation in
    line~\ref{line1:read:snapshot}.

  \item   If $o$ is a {\transfer} operation that returns {\false},
    it linearizes at the linearization point of the \emph{snapshot} operation in
    line~\ref{line1:tsf:snapshot}.
    
 \item   If $o$ is a {\transfer} operation that completed the \emph{update} operation,  it linearizes at the
    linearization point of the \emph{update} operation in
    line~\ref{line1:tsf:update}.
    If $o$ is incomplete in $H$, we complete it with response $\true$.

\end{itemize}  

Let $\bar H$ be the resulting complete history and let $L$ be the sequence
of complete invocations of $\bar H$ in the order of their
linearization points in $E$.
Note that, by the way we linearize invocations, the linearization of a
prefix of $E$ is a prefix of $L$. 

Now we show that $L$ is legal and, thus, $H$ is
linearizable.
We proceed by induction, starting with the empty (trivially legal)
prefix of $L$.
Let $L_{\ell}$ be the legal prefix of the first $\ell$ invocations and 
$op$ be the $(\ell+1)$st operation of $L$.
Let $op$ be invoked by process $p$.
The following cases are possible:

\begin{itemize}

\item $op$ is a {\read}$(a)$: the snapshot taken at the linearization point of $op$
  contains all successful transfers concerning $a$ in $L_{\ell}$. By
  the induction hypothesis, the resulting balance is non-negative.  

\item $op$ is a failed {\transfer}$(a,b,x)$: the snapshot taken at the linearization point of $op$
  contains all successful transfers concerning $a$ in $L_{\ell}$. By
  the induction hypothesis, the resulting balance is non-negative. 
  
\item  $op$ is a successful  {\transfer}$(a,b,x)$:  by the algorithm,
  before the linearization point of $op$, process $p$ took a snapshot.
  Let $L_{k}$, $k\leq\ell$, be the prefix of $L_{\ell}$ that only
  contain operations linearized before the point in time when the 
  snapshot was taken by $p$.  
  
  We observe that $L_k$ includes a \emph{subset} of all incoming transfers on $a$ and
  \emph{all} outgoing transfers on $a$ in $L_{\ell}$. Indeed, as $p$
  is the owner of $a$ and only the owner of $a$ can perform outgoing
  transfers on $a$, all outgoing transfers in $L_{\ell}$ were
  linearized before the moment $p$ took the snapshot within $op$.
  Thus, $\textit{balance}(a,L_k) \leq \textit{balance}(a,L_{\ell})$.%
  \footnote{
    Analogously to $\textit{balance}(a,S)$ that computes the balance for account $a$ based on the transfers contained in snapshot $S$,
    $\textit{balance}(a,L)$, if $L$ is a sequence of operations, computes the balance of account $a$ based on all transfers in $L$.
  }
  
  By the algorithm, as $op={\transfer}(a,b,x)$ succeeds, we have  
  $\textit{balance}(a,L_k)\geq x$.
  Thus, $\textit{balance}(a,L_{\ell})\geq x$ and the resulting balance
  in $L_{\ell+1}$ is non-negative.

\end{itemize}    

Thus, $H$ is linearizable. 
\end{proof}  

\begin{corollary}\label{cor:consensus}
The \textsf{asset-transfer} object type with one owner per account has consensus number $1$.
\end{corollary}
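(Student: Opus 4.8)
The plan is to obtain the corollary almost immediately from \Cref{th:waitfree}, combined with the robustness of Herlihy's consensus hierarchy~\cite{Her91}. Recall that the consensus number of an object type is the largest $n$ such that one can solve wait-free consensus among $n$ processes using any number of instances of that type together with read-write registers. Trivially, every object type has consensus number at least one, since single-process consensus requires no synchronization at all; so it suffices to establish the matching upper bound.

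First I would invoke the foundational result of Herlihy~\cite{Her91} that read-write registers have consensus number exactly one: there is no wait-free protocol using only registers that solves consensus among two or more processes. This is the upper bound we will transport onto the \textsf{asset-transfer} type.

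The crux of the argument is the standard fact that consensus number is monotone under wait-free implementability: if a type $T$ admits a wait-free linearizable implementation from objects that themselves have consensus number one (here, read-write registers), then $T$ has consensus number at most one. \Cref{th:waitfree} supplies exactly such an implementation, since atomic snapshots---and hence our whole construction---are wait-free implementable from registers~\cite{AADGMS93}. Concretely, suppose for contradiction that \textsf{asset-transfer} objects could solve wait-free consensus among two processes. Substituting into that protocol the register-based implementation of \Cref{th:waitfree} for every \textsf{asset-transfer} instance would yield a wait-free two-process consensus protocol using only read-write registers, contradicting Herlihy. The only step requiring any care is this substitution: one must verify that replacing each object by its wait-free linearizable implementation preserves both the correctness (linearizability) and the wait-freedom of the enclosing protocol, which is precisely the composability property of linearizable wait-free implementations.

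Combining the trivial lower bound (at least one) with this upper bound (at most one) yields consensus number exactly one, as claimed. I do not expect a genuine obstacle here: the entire mathematical content of the result already lives in \Cref{th:waitfree}, and the corollary merely restates that wait-free register implementation in the language of the consensus hierarchy.
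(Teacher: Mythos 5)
Your proof is correct and matches the paper's intent: the paper states the corollary without an explicit proof, relying precisely on the argument you spell out---\Cref{th:waitfree} gives a wait-free register-based implementation, so by the standard substitution argument (composability of wait-free linearizable implementations) the type cannot solve two-process consensus, and the trivial lower bound gives consensus number exactly $1$. Your write-up simply makes explicit the reasoning the paper leaves implicit.
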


%!TEX root = ../main.tex

\section{\name in the Message Passing Model}
\label{sec:algo}

In this section we abandon the crash-stop shared memory model that we only used for reasoning about the consensus number of the \textsf{asset-transfer} data type.
Instead, we consider a distributed system where processes communicate by sending messages through authenticated channels
and where a limited fraction of processes may behave in an arbitrary (Byzantine) manner.
In this setting, Byzantine processes may attempt to double-spend, i.e., initiate
transfers which cannot be justified by the balance in their accounts.

We first refine the specification of our asset transfer abstraction to adapt it to the Byzantine message passing model.
We then present present \namemp, an algorithm that implements this abstraction.

\subsection{Asset Transfer in Message Passing with Byzantine Faults}
\label{sec:def-byz}

We now refine our asset transfer specification for the Byzantine environment. We only require
that the transfer system behaves correctly towards \emph{benign} processes,
regardless of the behavior of Byzantine ones.
We say that a process is benign if it respects the algorithm and can only fail by crashing.
\cmt{YAP}{recovery?}

Informally, we require that every system execution appears to benign processes as a
correct sequential execution.
As a result, no benign process can be a victim of
a double-spending attack.

Moreover, we require that this execution respects the real-time order among
operations performed by benign processes~\cite{Her91}: when such an
operation completes it should be visible to every future
operation invoked by a benign process.

For the sake of efficiency, in our algorithm (\cref{sec:algo}), we
slightly relax the last requirement---while still preventing
double-spending.
%% \needsrev{@NSDI review D: ``is this a fundamental limitation of your approach?...''}
We require that \emph{successful} \operation{\transfer}{} operations invoked
by benign processes constitute a legal sequential history that preserves the real-time
order.
A \operation{read}{} or a failed \operation{transfer}{} operation invoked
by a benign process $p$ can be
``outdated''---it can be based on a stale state of $p$'s balance.
Informally, one can view the system requirements as \emph{linearizability}~\cite{herl90linearizability}
for successful transfers and \emph{sequential consistency}~\cite{Attiya1994} for failed
transfers and reads.
As progress (liveness) guarantees, we require that every operation
invoked by a correct process eventually completes.
\cmt{YAP}{eventual completition might not be considered good enough in some cases.. can we say more about the completition time, at least under additional assumptions?}
\cmt{<--MP}{In a completely asynchronous system, we cannot argue about any real-world time, but probably we can say something about the number of exchanged messages.}
One can argue that this relaxation incurs little impact on the system's utility, as long as all incoming transfers
are eventually applied.
%% \needsrev{In~\Cref{app:byz} we give a more precise discussion of this specification.}

\subsection{\namemp\ Overview}

We now present \namemp, a decentralized algorithm that implements secure transfers in a message passing system subject to Byzantine faults.
Instead of consensus, \namemp relies on a causally ordered broadcast which precisely captures the semantics of the transfer applications.

\Cref{fig:algo-modular} depicts the high-level modules of \namemp.
There are two main modules: one for tracking dependencies (i.e., the applied incoming transfers), plus an underlying secure broadcast protocol.
For simplicity, in this section we present only the basic transfer algorithm and assume an existing secure broadcast protocol as a black box (which we will describe later).
Depending on the exact system model and the properties of this broadcast protocol, we obtain two variants of \namemp.
Concretely, we consider a \emph{deterministic} secure broadcast algorithm, which we use to obtain \emph{\namedet} (\Cref{sec:algo-det}), and a \emph{probabilistic} secure broadcast protocol, which underlies \emph{\nameprob} (\Cref{sec:algo-prob}).

\begin{figure}[ht!]
\includegraphics[width=0.35\columnwidth]{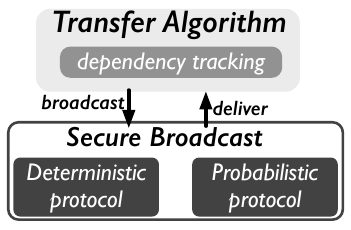}
\caption{High-level design of our asset transfer system. Depending on the underlying secure broadcast primitive, there are two variants: \emph{\namedet} works in the deterministic model, while \emph{\nameprob} has probabilistic guarantees.}
\label{fig:algo-modular}
\end{figure}

% \subsection{Overview}

Our algorithm works as follows.
To perform a transfer, a process $p$ broadcasts a message with the
transfer details.
% , including the outgoing account (in this case, the account of $p$), the incoming account, and the transferred amount.
These details match the arguments of the \operation{transfer}{} operation (see~\Cref{sec:def-sequential}), and consist of the outgoing account (in this case, the account of $p$), the incoming account, and the transferred amount.

To ensure the authenticity of operations---so that no process
is able to debit another process's account---we assume that
processes sign all their messages before broadcasting them.
In practice, similar to Bitcoin and other transfer systems,
every process possesses a public-private key pair that allows only $p$ to securely initiate transfers from its corresponding account.

Recall that in a secure transfer system, a correct process should not
accept a transfer $P$ before it accepts all transfers $P$ depends
  on.
In particular, transfers originating at the same process $p$ do not commute,
and thus must be delivered by all correct processes in the same order.
This property is known as the \emph{source order}, and we ensure it by relying on secure broadcast~\cite{ma97secure}.

Secure broadcast, by itself, is not sufficient to ensure that causality is preserved, because there is an additional scenario in which two transfers do not commute.
Suppose that process $p$ has initially zero balance and receives
money in a transfer $P_1$.
Thereafter, process $p$ immediately sends money to process $q$ in a transfer $P_2$.
If process $q$ tries to apply $P_2$ before being aware of $P_1$, then $q$ would consider the former transfer invalid due to insufficient funds.
Thus, $P_2$ depends on $P_1$ and all processes in the system should apply $P_1$ before $P_2$.
We enforce this ordering by attaching a dependency set---called a
\emph{history}---to every transfer.
Before delivering a transfer, we require every process to deliver the history attached to that transfer.
Note that in practice, this history need not contain the full details of other transfers, but merely their identifiers (in a similar vein as vector clocks \cite{fidgevc}).
\cmt{YAP}{explain in more detail, either here or later. Mention where the full details of transfers are stored, how they can be accessed despite churn, the amount of history that is necessary, etc...}
\cmt{MP}{Agree that this is important for practice, and it might become quite tricky to do this efficiently.}

\begin{figure}
\hrule \vspace{1mm}
 {\small
\begin{tabbing}
 bbb\=bb\=bb\=bb\=bb\=bb\=bb\=bb \=  \kill
\textbf{Local variables}: \\
$\textit{seq}[\textit{ }]$, initially $\textit{seq}[q] = 0$, $\forall q$ \emph{\{Number of validated transfers outgoing from $q$\}}\\
$\textit{rec}[\textit{ }]$, initially $\textit{rec}[q] = 0$, $\forall
q$ \emph{\{Number of delivered transfers from $q$\}}
\\
%\> $ \textit{msgs} \subseteq\Pi\times\Pi\times\Pi\times\Nat\times\Nat$, initially empty\\
%% \> $\forall q\in\Pi$, $hist[q] \in \T^*=(\Pi\times\Pi\times\Nat\times\Nat)^*$, initially $\epsilon$ \emph{\{history of $q$\}}
$hist[\textit{ }]$, initially $hist[q] = \emptyset$, $\forall q$ \emph{\{Set of validated transfers involving $q$\}}
%% \commentline{The local history of $q$}\\
\\
$\textit{deps}$, initially $\emptyset$ \emph{\{Set of last incoming transfers for account of local process p\}}
%% \commentline{Latest incoming transfers for $p$}\\
\\
$\textit{toValidate}$, initially $\emptyset$ \emph{\{Set of delivered (but not validated) transfers\}}
%% \commentline{Learned but not yet validated transfers}\\
\\
%\> $\textit{blacklist}\subseteq \Pi$, initially $\emptyset$ \commentline{Detected faulty by $p$}\\
%\> $R_1,\ldots,R_n$, initially $\bot$\\
\\
\nnll{}{} \textbf{operation} \textit{transfer}$(a,b,x)$ where $a=p$
\commentline{Transfer an amount of $x$ from account $a$ to account $b$}\\
\nnll\label{line:check-balance}\> \textbf{if} $\textit{balance}(a,hist[p]\cup\textit{deps})< x$
\textbf{then}\\
\nnll\label{line:response-false}\>\> \textbf{return} \textit{false}\\
%\nnll\> $\textit{seq} := \textit{seq}+1$\\
%\nnll\> $\textit{pending-op} := [a,b,x,\textit{seq}[p]+1]$\\
%  \\
\nnll\label{line:check-sbcast}\>
$\textit{broadcast}([(a,b,x,\textit{seq}[p]+1),\textit{deps}])$\\ % \commentsimple{Call secure broadcast}\\
\nnll\label{line:dep-null}\> $\textit{deps} :=\emptyset$\\
\\
\nnll{}{} \textbf{operation} \textit{read}$(a)$
\commentline{Read balance of account a}\\
%\nnll\> $\textit{pending-op} := [\textit{RE},a]$ \\
%\nnll\> $\textit{Sbroadcast}(\textit{pending-op}$ \\
\nnll\label{line:response-read}\> \textbf{return} $\textit{balance}(a,hist[a]\cup\textit{deps})$\\
\\
\commentsimple{Secure broadcast callback}\\
\nnll{}{}\label{line:deliver} \textbf{upon} $\textit{deliver}(q,m)$
\commentline{Executed when  $p$ delivers message $m$ from process $q$}\\
%\nnll\label{line:del1}\> \textbf{if} $m \notin
%\T \times \T^*$ \textbf{and} $q\notin \textit{blacklist}$
%\textbf{then}\\
%\nnll\>  $\textit{blacklist} := \textit{blacklist}\cup\{q\}$ \\ %; \textbf{return} $\false$\\
%\nnll\> \textbf{else}\\
\nnll\label{line:sb-deliver}\> let $m$ be $[(q,d,y,s),h]$\\
\nnll\label{line:nextrec}\> \textbf{if} $s=\textit{rec}[q]+1$ \textbf{then} \\
\nnll\>\> $\textit{rec}[q]:=\textit{rec}[q]+1$\\
\nnll\>\> $\textit{toValidate} :=  \textit{toValidate}\cup\{(q,m)\}$\\

\\
\nnll\label{line:validate}{}{} \textbf{upon} $(q,[t,h])\in \textit{toValidate} \;\wedge \;\textit{Valid}(q,t,h)$
\commentline{Executed when a transfer delivered from $q$ becomes valid}\\
\nnll\> let $t$ be $(c,d,y,s)$\\
\nnll\label{line:append-outgoing}\>\> $hist[c] := hist[c]\cup t$
\commentline{Update the history for the outgoing account c}\\
\nnll\label{line:append-incoming}\>\> $hist[d] := hist[d]\cup t$
\commentline{Update the history for the incoming account d}\\
\nnll\>\> $\textit{seq}[q] := s$\\
\nnll\>\> \textbf{if} $d=p$ \textbf{then} \\
\nnll\label{line:deps}\>\>\> $\textit{deps} := \textit{deps}\cup (c,d,y,s)$
\commentline{This transfer is incoming to account of local process p}
%% \commentline{Append a relevant operation to the local history}\\
\\
\nnll\>\> \textbf{if} $c=p$ \textbf{then} \\
\nnll\label{line:response-true}\>\>\> \textbf{return} \textit{true}
\commentline{This transfer is outgoing from account of local process p}\\
%% \commentline{Complete the invoked transfer}\\
\\
\nnll\label{line:valid}{}{} \textbf{function} $\textit{Valid}(q,t,h)$\\
\nnll\> let $t$ be $(c,d,y,s)$\\
\nnll\label{line:validation}\> \textbf{return} ($q=c$)\\
\nnll\label{line:check-hist1}\>\>\>\>\textbf{and} ($s= \textit{seq}[q]+1$)\\
\nnll\label{line:check-hist2}\>\>\>\>\textbf{and} ($\textit{balance}(c,hist[q])\geq y$)\\
\nnll\label{line:check-hist3}\>\>\>\>\textbf{and} ($h\subseteq hist[q]$)\\

\\
\nnll{}{} \textbf{function} $\textit{balance}(a,h)$\\
\nnll\label{line:balance}\> \textbf{return} sum of incoming transfers minus outgoing transfers for account $a$ in $h$
%\textbf{then}\\
%\nnll\>\>  %$\textit{blacklist} := \textit{blacklist}\cup\{q\}$;
%\textbf{return} $\false$\\
%\nnll\> \textbf{return} $\true$
%\vspace{-2mm}
\end{tabbing}
 }
 \hrule
\caption{\namemp{}: an algorithm for a consensusless transfer system based on secure broadcast. Code for every process $p$.}

\label{fig:banking-relaxed}
\end{figure}

% make sure no other content is on this page
% \clearpage

\subsection{The \namemp\ Algorithm}
\label{sec:transfer-algo}

\Cref{fig:banking-relaxed} describes our \namemp\ consensusless algorithm
implementing a transfer system (as defined in \cref{sec:def-byz}).
Each process $p$ maintains, for each process $q$, an integer
$\textit{seq}[q]$ reflecting the number of transfers which
process $q$ initiated and which process $p$ has validated and applied.
Process $p$ also maintains, for every process $q$, an integer
$\textit{rec}[q]$ which reflects the number of transfers which process $q$ has initiated and process $p$ has delivered (but not necessarily applied).

Additionally, there is also a list $\textit{hist}[q]$ of transfers which \emph{involve} process $q$.
We say that a transfer operation involves a process $q$ if that transfer is either outgoing or incoming on the account of $q$.
Each process $p$ maintains as well a local variable $\textit{deps}$.
This is a set of transfers \emph{incoming} for $p$ that $p$ has applied since  the last successful (\emph{outgoing}) transfer.
Finally, the set $\textit{toValidate}$ contains delivered transfers that are pending validation (i.e., have not been validated nor applied).

To perform a $\textit{transfer}$ operation, process $p$ first checks the
balance of its own account, and if there is not enough
funding, i.e., the balance is insufficient, returns $\textit{false}$ (line~\ref{line:response-false}).
Otherwise, process $p$ broadcasts a message with this operation via the secure broadcast primitive (line~\ref{line:check-sbcast}).
This message includes the three basic arguments of a \operation{transfer}{} operation as well as $\textit{seq}[p]+1$ and dependencies $\textit{deps}$.
Each correct process in the system eventually delivers this message via a callback from secure broadcast (line~\ref{line:deliver}).
Upon delivery, process $p$ checks this message for well-formedness (lines~\ref{line:sb-deliver} and~\ref{line:nextrec}), and then adds it to the set of messages pending validation.
We explain the validation procedure later.

% To pass validation, a transfer must satisfy the $\textit{Valid}$ function (the predicate in line~\ref{line:valid}).
Once a transfer passes validation (the predicate in line~\ref{line:validate} is satisfied), process $p$ applies this transfer on the local state.
Applying a transfer means that process $p$ adds this transfer to both the history of the outgoing (line~\ref{line:append-outgoing}) and incoming accounts (line~\ref{line:append-incoming}).
% adds this transfer to the corresponding $\textit{hist}[q]$ (line~\ref{line:append}).
% We consider that a transfer is applied when it is added to $\textit{hist}[q]$.
If the transfer is incoming for local process $p$, it is also added to $\textit{deps}$, the set of current dependencies for $p$ (line~\ref{line:deps}).
If the transfer is outgoing for $p$, i.e., it is the currently pending
\operation{transfer}{} operation invoked by $p$, then the response $\textit{true}$ is returned (line~\ref{line:response-true}).

To perform a \textit{read}$(a)$ operation for account $a$, process $p$ simply computes the
balance of this account based on the local history $hist[a]$ (line~\ref{line:balance}).

Before applying a transfer $op$ from some process $q$, process $p$ validates $op$ via the \emph{Valid} function (lines~\ref{line:valid}--\ref{line:check-hist3}).
To be valid, $op$ must satisfy four conditions.
The first condition is that process $q$ (the issuer of transfer $op$) must be the owner of the outgoing account for $op$ (line~\ref{line:validation}).
Second, any preceding transfers that process $q$ issued must have been validated (line~\ref{line:check-hist1}).
Third, the balance of account $q$ must not drop below zero (line~\ref{line:check-hist2}).
Finally, the reported dependencies of $op$ (encoded in $h$ of
line~\ref{line:check-hist3}) must have been validated and exist in
$\textit{hist}[q]$.

%% \needsrev{We prove the correctness of this algorithm in Appendix~\ref{app:proof}.}

%!TEX root = ../main.tex

\section{\namedet: the Deterministic Case}
\label{sec:algo-det}

\namedet implements a transfer algorithm for the deterministic system model.
It builds on a secure broadcast algorithm which assumes an asynchronous network of $N$ processes, where less than $N/3$ of processes can be Byzantine.
We briefly discuss this broadcast algorithm here\ignore{with further details deferred to Appendix~\ref{app:secure-broadcast})}, while the rest of the transfer algorithm is identical to the one we described earlier (\Cref{{fig:banking-relaxed}}).

\subsection{Deterministic Secure Broadcast}
\label{sec:sb}

Secure broadcast for the deterministic model has its roots in the Asynchronous Byzantine Agreement (ABA) problem, defined by Bracha and Toueg~\cite{br85acb}.
ABA is a single-shot abstraction for agreeing on the content of a single message broadcast from a designated sender.
Informally, secure broadcast is a multi-shot version of ABA, and guarantees that messages broadcast by a given (correct) sender are delivered by all correct processes in the same order.

% In a nutshell, secure broadcast guarantees that messages broadcast by a correct process are eventually delivered (or applied) by every correct process.
% In addition, if a correct process delivers a message $m$ and then delivers message $m'$, where both $m$ and $m'$ have been broadcast by the same process, then no correct process delivers $m'$ unless it has already delivered $m$ (the \emph{source order} property).

There are multiple algorithms for implementing secure broadcast in deterministic system models, e.g., the double-echo algorithm, initially described by Bracha~\cite{bra87asynchronous} as a single-shot version (which appears in several other works, including practical systems~\cite{cac11intro,cach02sintra,du18beat}), as well as the \emph{secure reliable multicast} of Reiter~\cite{re94ramp}, which relies on digital signatures and which was optimized in several aspects~\cite{ma97secure,MR97srm}.

Below we sketch a protocol for secure broadcast which we use in \namedet.
This protocol is not novel (unlike the probabilistic version we use for \nameprob in~\Cref{sec:algo-prob}), and it draws directly from the signature-based algorithm due to Malkhi and Reiter~\cite{MR97srm}.
We chose to use this protocol for its simplicity.
This description uses an underlying reliable broadcast primitive~\cite{cac11intro}, and we use the terms \emph{reliable-broadcast} and \emph{reliable-deliver} to denote the invocation and callback of this primitive.

To broadcast a message $m$ securely, a process $s$ attaches a sequence number $i$ to that message and then disseminates the tuple $(m, i)$ using reliable-broadcast.
Upon reliable-delivery of $(m, i)$, every process $p$ checks whether this is the first time it sees sequence number $i$ from process $s$. If yes, process $p$ replies directly to $s$ with a signed acknowledgment of the tuple $(m, i)$.
When process $s$ receives acknowledgments from a quorum of more than two thirds of the processes, $s$ broadcasts the set of obtained acknowledgments using reliable-broadcast.
Any process can deliver $m$ after obtaining the correct set of acknowledgments for $(m, i)$, and after having delivered all messages from $s$ that have sequence numbers smaller than $i$.

Intuitively, this algorithm ensures that any two delivered
  messages are ``witnessed'' by at least one correct process.
  This way we ensure that messages from the same source, whether benign
  or Byzantine, are delivered by benign processes in the same order.

%% \needsrev{For a precise definition of secure broadcast and a more detailed discussion of its implementation, see Appendix~\ref{app:secure-broadcast}.}

% \needsrev{Here we can observe that secure broadcast is weaker than consensus (and thus the FLP impossibility \cite{FLP85,LA87} does not apply).
% \needsrev{the crucial difference lies in...}~\cite{ma97secure}.
% Consensus is required to terminate at each correct process even in presence of different, i.e., conflicting, proposals.
% In contrast, secure broadcast does not need to deliver any message in
%  case the sender is Byzantine.
% if there is a conflict.}

%!TEX root = ../main.tex

\section{\nameprob: the Probabilistic Case}
\label{sec:algo-prob}

\nameprob implements an asset transfer algorithm in a probabilistic system model, by building on a novel secure broadcast algorithm with probabilistic guarantees.
In the rest of this section we first discuss our system model more precisely (\Cref{section:prob-model}).
We then present the secure broadcast algorithm in three steps.
First, we introduce an abstraction for broadcasting a single message with probabilistic delivery guarantees (\Cref{section:broadcast}).
Second, we refine this abstraction to make it \emph{consistent}, ensuring that processes deliver the \emph{same} single message (\Cref{section:singleshot}).
Third, we obtain probabilistic secure broadcast as a multi-shot abstraction over consistent broadcast (\Cref{section:multishot}).

%!TEX root = ../../main.tex

\subsection{Preliminaries}
\label{section:prob-model}

We discuss the system model, including assumptions we make on the network and on the information available both to correct and Byzantine processes.
We also introduce notation that will be valid throughout the rest of this section.

For all our probabilistic algorithms, we have the following assumptions:
\begin{enumerate}
    \item \label{ergpdeassumption:processes} (\textbf{Processes}) The set $\Pi$ of processes partaking in the algorithm is fixed. Unless stated otherwise, we let $N = |\Pi|$ denote the number of processes, and refer to the $i$-th process as $\pi_i \in \Pi$.
    \item \label{ergpdeassumption:links} (\textbf{Links}) Any two processes can communicate via reliable, authenticated, point-to-point links~\cite{cac11intro}.
    \item \label{ergpdeassumption:failures} (\textbf{Failures}) At most a fraction $f$ of the processes is Byzantine, i.e., subject to arbitrary failures. Byzantine processes are under the control of the same adversary, and can take coordinated action. We also assume that the adversary does not have access to the output of local randomness sources of correct processes.
    %% \needsrev{\\Check whether we can call this an \emph{adaptive online adversary}}
    Unless stated otherwise, we let $\Pi_C \subseteq \Pi$ denote the set of correct processes and $C = \abs{\Pi_C} = \lceil\rp{1 - f} N\rceil$ denote the number of correct processes.
    \item \label{ergpdeassumption:asynchrony} (\textbf{Asynchrony}) Byzantine processes can cause arbitrary but finite delays on any link, including links between pairs of correct processes.
    \item \label{ergpdeassumption:anonymity} (\textbf{Anonimity}) Byzantine processes cannot determine which correct processes another correct process is communicating with.
    \item \label{ergpdeassumption:sampling} (\textbf{Sampling}) Every process has direct access to an oracle $\Omega$ that, provided with an integer $n \leq N$, yields the public keys of $n$ distinct processes, chosen uniformly at random from $\Pi$.
\end{enumerate}

We later weaken assumption (\ref{ergpdeassumption:processes}) into an inequality, as we generalize our results to systems with slow churn.
Assumption (\ref{ergpdeassumption:asynchrony}) represents one of the main strengths of this work: messages can be delayed arbitrarily and maliciously without compromising the security property of any of the algorithms presented in this work.
Assumption (\ref{ergpdeassumption:anonymity}) represents the strongest constraint we put on the knowledge of the adversary. We later show that, without this assumption, an adversary could easily poison the view of the system of a targeted correct process without having to interfere with any local randomness source.
Even against ISP-grade adversaries, assumption (\ref{ergpdeassumption:anonymity}) can be implemented in practice by means of, e.g., onion routing~\cite{syv04tor} or private messaging~\cite{van15vuvu} algorithms.
Assumption (\ref{ergpdeassumption:sampling}) reduces, in the permissioned case, to randomly sampling an exhaustive list of processes.
We later discuss how a membership sampling algorithm can be used, in conjunction with Sybil resistance strategies, to implement the sampling oracle in the permissionless case (see \cref{section:permissionless}).

%!TEX root = ../../main.tex

\subsection{\Broadcast}
\label{section:broadcast}

In this section, we introduce the \emph{\broadcast{}} abstraction and discuss its properties.
This abstraction serves the purpose of reliably broadcasting a single message from a designated (correct) sender to all correct processes.
We then present \erg, a probabilistic algorithm that implements \broadcast, and evaluate its security and complexity as a function of its parameters.
We use \broadcast\ in the implementation of \pde\ (see \Cref{section:singleshot}) to initially distribute a message from the designated sender to all correct processes.

\subsubsection{Definition}

The \broadcast{} interface (instance $\broadcastinstance$, sender $\sigma$) exposes the following two events:
\begin{itemize}
    \item \textbf{Request}: $\event{\broadcastinstance}{Broadcast}{m}$: Broadcasts a message $m$ to all processes. This is only used by $\sigma$.
    \item \textbf{Indication} $\event{\broadcastinstance}{Deliver}{m}$: Delivers a message $m$ broadcast by process $\sigma$.
\end{itemize}

For any $\epsilon \in [0, 1]$, \Broadcast\ is $\epsilon$-secure if:
\begin{enumerate}
    \item \textbf{No duplication}: No correct process delivers more than one message.
    \item \textbf{Integrity}: If a correct process delivers a message $m$, and $\sigma$ is correct, then $m$ was previously broadcast by $\sigma$.
    \item $\epsilon$-\textbf{Validity}: If $\sigma$ is correct, and $\sigma$ broadcasts a message $m$, then $\sigma$ eventually delivers $m$ with probability at least $(1 - \epsilon)$.
    \item $\epsilon$-\textbf{Totality}: If a correct process delivers a message, then every correct process eventually delivers a message with probabiity at least $(1 - \epsilon)$.
\end{enumerate}

\subsubsection{Algorithm}
\label{subsection:ergalgorithm}

\begin{algorithm}
\begin{algorithmic}[1]
\Implements
    \Instance{\broadcastabstraction}{\broadcastinstance}
\EndImplements

\Uses
    \Instance{AuthenticatedPointToPointLinks}{al}
\EndUses

\Parameters
    \State $G$: expected gossip sample size
\EndParameters

\sUpon{pb}{Init}
    \State $\mathcal{G} = \Omega(poisson(G))$; \label{line:erginitializesample}
    \ForAll{\pi}{\mathcal{G}}
        \Trigger{al}{Send}{\pi, [\text{\tt GossipSubscribe}]}; \label{line:ergsubscribe}
    \EndForAll
    \State $delivered = \bot$;
\EndsUpon

\Upon{al}{Deliver}{\pi, [\text{\tt GossipSubscribe}]} \label{line:ergreceivesubscribe}
    \If{delivered \neq \bot}
        \State $(message, signature) = delivered$;
        \Trigger{al}{Send}{\pi, [\text{\tt Gossip}, message, signature]}; \label{line:ergcatchup}
    \EndIf
    \State $\mathcal{G} \leftarrow \mathcal{G} \cup \{\pi\}$; \label{line:ergupdatesample}
\EndUpon

\Procedure{dispatch}{message, signature}
    \If{delivered = \bot} \label{line:ergcheckdelivered}
        \State $delivered \leftarrow (message, signature)$; \label{line:ergsetdelivered}
        \ForAll{\pi}{\mathcal{G}}
            \Trigger{al}{Send}{\pi, [\text{\tt Gossip}, message, signature]}; \label{line:ergforward}
        \EndForAll
        \Trigger{\broadcastinstance}{Deliver}{message} \label{line:ergdeliver}
    \EndIf
\EndProcedure

\Upon{pb}{Broadcast}{message} \Comment{only process $\sigma$}
    \State $dispatch(message, sign(message))$; \label{line:ergbroadcast}
\EndUpon

\Upon{al}{Deliver}{\pi, [\text{\tt Gossip}, message, signature]}
    \If{verify(\sigma, message, signature)} \label{line:ergchecksignature}
        \State $dispatch(message, signature)$;
    \EndIf
\EndUpon

\end{algorithmic}
\caption{\erg}
\label{algorithm:erg}
\end{algorithm}

\Cref{algorithm:erg} lists the implementation of \erg.
This algorithm distributes a single message\footnote{Note that one instance of \broadcast\ only distributes a single message. To disseminate multiple messages, we use multiple instances of \broadcast.} across the system by means of gossip: upon reception, a correct process relays the message to a set of randomly selected neighbors.
%% \needsrev{(The following parameter $G$ refers exactly to this set of ``randomly selected neighbors'', right?)}
The algorithm depends on one integer parameter, $G$ (expected gossip sample size), whose value we study in \cref{appendix:erganalysis}.

\paragraph{Initialization} Upon initialization, (line \ref{line:erginitializesample}) a correct process randomly samples a value $\bar G$ from a \emph{Poisson} distribution with expected value $G$, and uses the sampling oracle $\Omega$ to select $\bar G$ distinct processes that it will use to initialize its gossip sample $\mathcal{G}$.

\paragraph{Link reciprocation} Once its gossip sample is initialized, a correct process sends a {\tt GossipSubscribe} message to all the processes in $\mathcal{G}$ (line \ref{line:ergsubscribe}). Upon receiving a {\tt GossipSubscribe} message from a process $\pi$ (line \ref{line:ergreceivesubscribe}), a correct process adds $\pi$ to its own gossip sample $\mathcal{G}$ (line \ref{line:ergupdatesample}), and sends back the gossiped message if it has already received it (line \ref{line:ergcatchup}).

\paragraph{Gossip} When broadcasting the message (line \ref{line:ergbroadcast}), a correct designated sender $\sigma$ signs the message and sends it to every process in its sample $\mathcal{G}$ (line \ref{line:ergforward}). Upon receiving a correctly signed message from $\sigma$ (line \ref{line:ergchecksignature}) for the first time (this is enforced by updating the value of $delivered$, line \ref{line:ergcheckdelivered}), a correct process delivers it (line \ref{line:ergdeliver}) and forwards it to every process in its gossip sample (line \ref{line:ergforward}).

For a discussion on the correctness of \erg, we refer the interested reader to~\Cref{appendix:erganalysis}.

%!TEX root = ../../main.tex

\subsection{\Singleshot}
\label{section:singleshot}

In this section, we introduce the \emph{\singleshot{}} abstraction and discuss its properties. We then present \pde, a probabilistic algorithm that implements \singleshot, and evaluate its security and complexity as a function of its parameters.

The \singleshot\ abstraction allows the set of correct processes to agree on a message from a designated sender, potentially Byzantine. \Singleshot\ is a strictly stronger abstraction than \broadcast. \Broadcast\ only guarantees that, if the sender is correct, all correct processes deliver its message and, if any correct process delivers a message, every correct process delivers a message. \Singleshot\ also guarantees that, even if the sender is Byzantine, no two correct processes will deliver different messages.

We use \singleshot\ in the implementation of \tfayto\ (see \cref{section:multishot}) as a way to consistently broadcast sequenced messages.

\subsubsection{Definition}

The \singleshot{} interface (instance $\singleshotinstance$, sender $\sigma$) exposes the following two events:
\begin{itemize}
\item \textbf{Request}: $\event{\singleshotinstance}{Broadcast}{m}$: Broadcasts a message $m$ to all processes. This is only used by $\sigma$.
\item \textbf{Indication}: $\event{\singleshotinstance}{Deliver}{m}$: Delivers a message $m$ broadcast by process $\sigma$.
\end{itemize}

For any $\epsilon \in [0, 1]$, \Singleshot\ is $\epsilon$-secure if:
\begin{enumerate}
\item \textbf{No duplication}: No correct process delivers more than one message.
\item \textbf{Integrity}: If a correct process delivers a message $m$, and $\sigma$ is correct, then $m$ was previously broadcast by $\sigma$.
\item $\epsilon$-\textbf{Validity}: If $\sigma$ is correct, and $\sigma$ broadcasts a message $m$, then $\sigma$ eventually delivers $m$ with probability at least $(1 - \epsilon)$.
\item $\epsilon$-\textbf{Totality}: If a correct process delivers a message, then every correct process eventually delivers a message with probabiity at least $(1 - \epsilon)$.
\item $\epsilon$-\textbf{Consistency}: Every correct process that delivers a message delivers the same message with probability at least $(1 - \epsilon)$.
\end{enumerate}

\subsubsection{Algorithm}
\label{subsection:pdealgorithm}

\begin{algorithm}
\begin{algorithmic}[1]
\Implements
    \Instance{\singleshotabstraction}{\singleshotinstance}
\EndImplements

\Uses
    \Instance{AuthenticatedPointToPointLinks}{al}
    \Instance{\broadcastabstraction}{\broadcastinstance}
\EndUses

\Parameters
    \State $E$: echo sample size \tabto*{5cm} $\hat E$: ready threshold
    \State $R$: ready sample size \tabto*{5cm} $\hat R$: feedback threshold
    \State $D$: delivery sample size \tabto*{5cm} $\hat D$: delivery threshold
\EndParameters

\Procedure{sample}{message, size}
    \State $\psi = \emptyset$;
    \Ntimes{size}
        \State $\psi \leftarrow \psi \cup \Omega(1)$; \label{line:pdesampleselection}
    \EndNtimes
    \ForAll{\pi}{\psi}
        \Trigger{al}{Send}{\pi, [message]}; \label{line:pdesamplesubscribe}
    \EndForAll
    \State \Return $\psi$;
\EndProcedure

\sUpon{\singleshotinstance}{Init} \label{line:pdeinitialization}
    \State $echo = \bot$; \tabto*{3.5cm} $ready = \bot$; \tabto*{7cm} $delivered = \false$;
    \State
    \State $\mathcal{E} = sample(\text{\tt EchoSubscribe}, E)$; \tabto*{6.5cm} $replies.echo = \{\bot\}^E$;
    \State $\mathcal{R} = sample(\text{\tt ReadySubscribe}, R)$; \tabto*{6.5cm} $replies.ready = \{\bot\}^R$;
    \State $\mathcal{D} = sample(\text{\tt ReadySubscribe, D})$; \tabto*{6.5cm} $replies.delivery = \{\bot\}^D$;
    \State
    \State $\tilde{\mathcal{E}} = \emptyset$; \tabto*{2cm} $\tilde{\mathcal{R}} = \emptyset$;
\EndsUpon

\Upon{al}{Deliver}{\pi, [\text{\tt EchoSubscribe}]}
    \If{echo \neq \bot}
        \State $(message, signature) = echo$;
        \Trigger{al}{Send}{\pi, [{\tt Echo}, message, signature]}; \label{line:pdeechocatchup}
    \EndIf
    \State $\tilde{\mathcal{E}} \leftarrow \tilde{\mathcal{E}} \cup \{\pi\}$; \label{line:pdeechosubscribe}
\EndUpon
\algstore{pde}
\end{algorithmic}
\caption{\pde}
\label{algorithm:pde}

\end{algorithm}

\begin{algorithm}
\begin{algorithmic}[1]
\algrestore{pde}

\Upon{al}{Deliver}{\pi, [\text{\tt ReadySubscribe}]}
    \If{ready \neq \bot}
        \State $(message, signature) = ready$;
        \Trigger{al}{Send}{\pi, [{\tt Ready}, message, signature]}; \label{line:pdereadycatchup}
    \EndIf
    \State $\tilde{\mathcal{R}} \leftarrow \tilde{\mathcal{R}} \cup \{\pi\}$; \label{line:pdereadysubscribe}
\EndUpon

\Upon{\singleshotinstance}{Broadcast}{message} \Comment{only process $\sigma$}
    \Trigger{\broadcastinstance}{Broadcast}{[\text{\tt Send}, message, sign(message)]}; \label{line:pdebroadcast}
\EndUpon

\Upon{\broadcastinstance}{Deliver}{[\text{\tt Send}, message, signature]} \label{line:pdepbdelivery}
    \If{verify(\sigma, message, signature)}
        \State $echo \leftarrow (message, signature)$;
        \ForAll{\rho}{\tilde{\mathcal{E}}}
            \Trigger{al}{Send}{\rho, [\text{\tt Echo}, message, signature]}; \label{line:pdesendecho}
        \EndForAll
    \EndIf
\EndUpon

\Upon{al}{Deliver}{\pi, [\text{\tt Echo}, message, signature]}
    \If{\pi \in \mathcal{E}\; \textbf{and}\; verify(\sigma, message, signature)\; \textbf{and}\; replies.echo[\pi] = \bot} \label{line:pdeechomessagecheck}
        \State $replies.echo[\pi] \leftarrow (message, signature)$;
    \EndIf
\EndUpon

\UponExists{message}{|\{\rho \in \mathcal{E} \mid replies.echo[\rho] = (message, signature)\}| \geq \hat E\; \textbf{and}\; ready = \bot} \label{line:pdereadyfromecho}
    \State $ready \leftarrow (message, signature)$;
    \ForAll{\rho}{\tilde{\mathcal{R}}}
        \Trigger{al}{Send}{\rho, [\text{\tt Ready}, \sigma, message, signature]}; \label{line:pdesendreadyfromecho}
    \EndForAll
\EndUponExists

\Upon{al}{Deliver}{\pi, [\text{\tt Ready}, message, signature]}
    \If{verify(\sigma, message, signature)}
        \If{\pi \in \mathcal{R}\; \textbf{and}\; replies.ready[\pi] = \bot} \label{line:pdereadymessagesourcecheck}
            \State $replies.ready[\pi] \leftarrow (message, signature)$;
        \EndIf
        \If{\pi \in \mathcal{D}\; \textbf{and}\; replies.delivery[\pi] = \bot} \label{line:pdedeliverymessagesourcecheck}
            \State $replies.delivery[\pi] \leftarrow (message, signature)$;
        \EndIf
    \EndIf
\EndUpon

\algstore{pde}
\end{algorithmic}
\end{algorithm}

\begin{algorithm}
\begin{algorithmic}[1]
\algrestore{pde}

\UponExists{message}{|\{\rho \in \mathcal{R} \mid replies.ready[\rho] = (message, signature)\}| \geq \hat R\; \textbf{and}\; ready = \bot} \label{line:pdereadyfromready}
    \State $ready \leftarrow (message, signature)$;
    \ForAll{\rho}{\tilde{\mathcal{R}}}
        \Trigger{al}{Send}{\rho, [\text{\tt Ready}, \sigma, message, signature]}; \label{line:pdesendreadyfromready}
    \EndForAll
\EndUponExists

\UponExists{message}{|\{\rho \in \mathcal{D} \mid replies.delivery[\rho] = (message, signature)\}| \geq \hat D\; \textbf{and}\; delivered = \false} \label{line:pdedeliverycondition}
    \State $delivered \leftarrow \true$;
    \Trigger{\singleshotinstance}{Deliver}{message}; \label{line:pdedeliver}
\EndUponExists

\end{algorithmic}
\end{algorithm}

\Cref{algorithm:pde} lists the implementation of \pde.

This algorithm consistently distributes a single message\footnote{Note that one instance of \singleshot\ only distributes a single message. To disseminate multiple messages, we use multiple instances of \singleshot.} across the system, as follows:
\begin{itemize}
    \item Initially, \broadcast\ distributes potentially conflicting copies of the message to every correct process.
    \item Upon receiving a message $m$ from \broadcast, a correct process issues an {\tt Echo} message for $m$.
    \item When enough {\tt Echo} or {\tt Ready} messages have been collected for the same message $m$, a correct process issues a {\tt Ready} message for $m$.
    \item When enough {\tt Ready} messages have been collected for the same message $m$, a correct process delivers $m$.
\end{itemize}

A correct process collects {\tt Echo} and {\tt Ready} messages from three randomly selected samples (\emph{echo sample}, \emph{ready sample} and \emph{delivery sample}). The sizes of these samples are determined by three integer parameters ($E$, $R$ and $D$, respectively). Three additional integer parameters ($\hat E \leq E$, $\hat R \leq R$, $\hat D \leq D$) represent \emph{thresholds} to trigger the issue of {\tt Ready} messages and the delivery of the message. We discuss the values of the six parameters of \pde\ in \cref{appendix:pdeanalysis}.

\paragraph{Sampling} Upon initialization (line \ref{line:pdeinitialization}), a correct process randomly selects three samples (an \textbf{echo sample} $\mathcal{E}$ of size $E$, a \textbf{ready sample} $\mathcal{R}$ of size $R$, and a \textbf{delivery sample} $\mathcal{D}$ of size $D$). Samples are selected with replacement by repeatedly calling $\Omega$ (line \ref{line:pdesampleselection}). A correct process sends an {\tt EchoSubscribe} message to all the processes in its echo sample, and a {\tt ReadySubscribe} message to all the processes in its ready and delivery samples (line \ref{line:pdesamplesubscribe}).

\paragraph{Publish-subscribe} Unlike in the deterministic version of {\tt Authenticated Double-Echo}, where a correct process broadcasts its {\tt Echo} and {\tt Ready} messages to the whole system, here each process only listens for messages coming from its samples (lines \ref{line:pdeechomessagecheck}, \ref{line:pdereadymessagesourcecheck}, \ref{line:pdedeliverymessagesourcecheck}).

A correct process maintains an \textbf{echo subscription set} $\tilde{\mathcal{E}}$ and a \textbf{ready subscription set} $\tilde{\mathcal{R}}$.
Upon receiving a {\tt Subscribe} message from a process $\pi$, a correct process adds $\pi$ to $\tilde{\mathcal{E}}$ (in case of {\tt EchoSubscribe}, line \ref{line:pdeechosubscribe}) or to $\tilde{\mathcal{R}}$ (in case of {\tt ReadySubscribe}, line \ref{line:pdereadysubscribe}). If a correct process receives a {\tt Subscribe} message \emph{after} publishing an {\tt Echo} or a {\tt Ready} message, it also sends back the previously published message (line \ref{line:pdeechocatchup},\ref{line:pdereadycatchup}).
%In order to minimize the communication complexity of the algorithm, {\tt Echo} and {\tt Ready} messages are sent only to \textbf{subscribers}.
A correct process will only send its {\tt Echo} and {\tt Ready} messages (lines \ref{line:pdesendecho}, \ref{line:pdesendreadyfromecho}, \ref{line:pdesendreadyfromready}) to its echo and ready subscription sets respectively.

\paragraph{Echo} The designated sender $\sigma$ initially broadcasts its message using \broadcast\ (line \ref{line:pdebroadcast}). Upon \broadcastinstance.Deliver of a message $m$ (correctly signed by $\sigma$) (line \ref{line:pdepbdelivery}), a correct process sends an {\tt Echo} message for $m$ to all the nodes in its echo subscription set (line \ref{line:pdesendecho}).

\paragraph{Ready} A correct process sends a {\tt Ready} message for a message $m$ (correctly signed by $\sigma$) to all the processes in its ready subscription sample (lines \ref{line:pdesendreadyfromecho}, \ref{line:pdesendreadyfromready}) upon collecting either of:
\begin{itemize}
    \item At least $\hat E$ {\tt Echo} messages for $m$ from its echo sample (line \ref{line:pdereadyfromecho}).
    \item At least $\hat R$ {\tt Ready} messages for $m$ from its ready sample (line \ref{line:pdereadyfromready})
\end{itemize}

\paragraph{Delivery} Upon collecting at least $\hat D$ {\tt Ready} messages for the same message $m$ (correctly signed by $\sigma$) from its delivery sample (line \ref{line:pdedeliverycondition}), a correct process delivers $m$ (line \ref{line:pdedeliver}).

For an analysis of \pde, we refer the interested reader to~\Cref{appendix:pdeanalysis}.
%!TEX root = ../../main.tex

\subsection{\Multishot}
\label{section:multishot}

In this section, we introduce the \multishot{} abstraction.
This abstraction allows correct processes to agree on a sequence of messages sent from a designated sender, potentially Byzantine.
\Multishot is a strictly stronger abstraction than \singleshot, because the former allows for an arbitrary sequence of messages to be delivered in a consistent order.
We then present \tfayto, a probabilistic algorithm that implements \Multishot .

\subsubsection{Definition}

The \multishot interface (instance $\multishotinstance$, sender $\sigma$) exposes the following two events:
\begin{itemize}
\item \textbf{Request}: $\event{\multishotinstance}{Broadcast}{m}$: Broadcasts a message $m$ to all processes. This is only used by process $\sigma$.
\item \textbf{Indication}: $\event{\multishotinstance}{Deliver}{m}$: Delivers a message $m$ broadcast by process $\sigma$.
\end{itemize}

\begin{definition}
Let $\pi$ be a correct process. Then $\pi$ \emph{initially broadcasts (or delivers)} a sequence of messages $m_1, \ldots, m_n$ if the sequence of messages it broadcasts (or delivers) begins with $m_1, \ldots, m_n$.
\end{definition}

For any $\epsilon \in [0, 1]$, we say that \multishot\ is $\epsilon$-secure if:
\begin{enumerate}
    \item \textbf{No creation}: If $\sigma$ is correct, and $\sigma$ never broadcasts more than $n$ messages, then no correct process delivers more than $n$ messages.
    \item \textbf{Integrity}: If a correct process delivers a message $m$, and $\sigma$ is correct, then $m$ was previously broadcast by $\sigma$.
    \item $\epsilon$-\textbf{Multi-validity}: If $\sigma$ is correct, and $\sigma$ initially broadcasts $m_1, \ldots, m_n$, then $\sigma$ eventually initially delivers $m_1, \ldots, m_n$ with probability at least $(1 - \epsilon)^n$.
    \item $\epsilon$-\textbf{Multi-totality}: If a correct process delivers $n$ messages, then every correct process eventually delivers $n$ messages with probability at least $(1 - \epsilon)^n$.
    \item $\epsilon$-\textbf{Multi-consistency}: Every correct process that delivers $n$ messages initially delivers the same sequence of $n$ messages with probability at least $(1 - \epsilon)^n$.
\end{enumerate}

\subsubsection{Algorithm}
\label{subsection:tfaytoalgorithm}

\begin{algorithm}
\begin{algorithmic}[1]
\Implements
    \Instance{\multishotabstraction}{\multishotinstance}
\EndImplements
\Uses
    \Instance{\singleshotabstraction}{\singleshotinstance} \Comment{multiple instances}
\EndUses

\sUpon{\multishotinstance}{Init} \label{line:tfaytoinitialization}
    \State $next = 0$;
    \State $expected = 0$;
    \State
    \State $messages[next] = \bot$; \label{line:tfaytomessagesinit}
    \State Initialize a new instance $\singleshotinstance.next$ of \singleshotabstraction; \label{line:tfaytofirstinstance}
\EndsUpon

\Procedure{expand}{{}} \label{line:tfaytoexpand}
    \State $next \leftarrow next + 1$;
    \State $messages[next] = \bot$;
    \State Initialize a new instance $\singleshotinstance.next$ of \singleshotabstraction;
\EndProcedure

\Upon{\multishotinstance}{Broadcast}{message} \Comment{only process $\sigma$} \label{line:tfaytobroadcast}
    \Trigger{\singleshotinstance.next}{Broadcast}{message}; \label{line:tfaytosingleshotbroadcast}
    \State $expand()$;
\EndUpon

\Upon{\singleshotinstance.index}{Deliver}{message} \label{line:tfaytosingleshotdelivery}
    \State $messages[index] \leftarrow message$;
\EndUpon

\cUpon{messages[expected] \neq \bot} \label{line:tfaytoexpecteddelivered}
    \Trigger{\multishotinstance}{Deliver}{messages[expected]}; \label{line:tfaytodelivery}
    \State $expected \leftarrow expected + 1$;
    \If{self \neq \sigma}
        \State $expand()$;
    \EndIf
\EndcUpon
\end{algorithmic}
\caption{\tfayto}
\label{algorithm:tfayto}
\end{algorithm}

\Cref{algorithm:tfayto} presents the implementation of \tfayto.

\tfayto\ consistently distributes across the system a sequence of messages in consistent order. It does so by using one distinct instance of \singleshotabstraction{} (\Cref{section:singleshot}) for each message in the sequence. Instances are incrementally numbered, which allows for reordering on the receiver's end.

\paragraph{Initialization} Upon initialization (line \ref{line:tfaytoinitialization}), all correct processes initialize one instance of \singleshotabstraction{} (line \ref{line:tfaytofirstinstance}), that will be used to consistently broadcast the first message. Each process also initializes an array of all the messages received through \singleshot\ (line \ref{line:tfaytomessagesinit}).

\paragraph{Broadcast} When broadcasting (line \ref{line:tfaytobroadcast}), the sender simply triggers $\singleshotinstance$.Broadcast  on the latest instance of \singleshotabstraction (line \ref{line:tfaytosingleshotbroadcast}). It then initializes new instance of \singleshotabstraction. The index of the new instance is maintained in the variable $next$, and incremented in $expand()$ (line \ref{line:tfaytoexpand}).

\paragraph{Delivery} Upon $\singleshotinstance$.Deliver (line \ref{line:tfaytosingleshotdelivery}), a correct process adds the message to the array of messages that have been received. When the next expected message (i.e., the message with the lowest sequence number which has not yet been $\multishotinstance$.Delivered) is $\singleshotinstance$.Delivered (line \ref{line:tfaytoexpecteddelivered}), it is also $\multishotinstance$.Delivered (line \ref{line:tfaytodelivery}).
%!TEX root = ../../main.tex

\subsection{Permissionless Environment}
\label{section:permissionless}

In this section we discuss the deployment of \name\ (in particular \nameprob) in a permissionless environment.

Our implementation of \nameprob\ relies on \erg\ and \pde, both of which use an oracle $\Omega$ for obtaining random samples of processes. The size of a sample is optimized for \emph{representativeness}: this ensures, e.g., that with high probability only a bounded fraction of a sample is Byzantine.

In private / permissioned systems with full membership view, $\Omega$ can be obtained from a simple local randomness generator by picking a sample from a local list of processes.

In a public / permissionless system, however, sampling a very large, dynamically changing set of processes is non-trivial, especially in a Byzantine environment.

Moreover, in a permissionless setting, the assumption of a limited fraction of faulty processes might be broken by Sybil attacks.

To implement $\Omega$, we use an existing Byzantine-tolerant sampling protocol (Brahms \cite{bor09brahms}).
However, Brahms requires, but does not provide, a Sybil resistance mechanism.

As in all public / permissionless solutions, the problem of Sybil attacks needs to be addressed in our permissionless variant as well.
An effective solution that does not rely on any trusted admission control scheme inevitably requires linking at least some critical parts of the protocol to some real or virtual resource that is by its nature limited.
In Bitcoin, this is achieved through the (in)famous proof-of-work scheme, where computation power is used as this limited resource.
Algorand leverages the virtual currency itself to attribute ``voting power'' to participants (proof-of-stake).
Such proof-of-* mechanisms rely on a participant having a proof of her ``right to vote'' that \emph{any} other participant can easily verify.
Moreover, these mechanisms are usually an integral part of the protocol that uses them.
The way \name\ prevents Sybil attacks has two advantages over traditional methods.

First, our solution decouples the Sybil resistance mechanism from the protocol itself.
Any sub-protocol can be used to prove that participants are genuine (as opposed to Sybil identities).
Proof-of-work is just an example of a mechanism that such a sub-protocol can use.
Other mechanisms, even ones yet to be invented, are easily pluggable into our design.

Second, we relax the properties of the Sybil resistance mechanism that are necessary for it to be used with \name.
In particular, while most curent strategies are based on a \emph{globally verifiable} proof of the ownership (or at least control) of some resource,
we only require \emph{local verifiability}: only a small subset of participants needs to be able to verify a proof.
Specifically, a participant only presents proofs to those participants she is directly communicating with, and it is sufficient if these participants are able to verify these proofs.

Removing the need for global verifiability opens new doors to much more efficient Sybil resistance techniques, as we show below.

For the Sybil resistance sub-protocol, we propose a novel \emph{proof-of-bandwidth} scheme.
As the name suggests, the voting power of participants is bound to the (generally limited) resource of bandwidth.
A participant proves herself to the peers by sending them data over the network.
Participant $A$ considers another participant $B$ to be genuine (and takes into account $B$'s messages when taking decisions) only if $B$ periodically sends data to $A$.
Nodes that fail to send enough data to $A$ are disregarded by $A$.
Note that in general $B$ cannot prove to anyone else than $A$ that data has been transmitted, but in our case this is sufficient (no need for global verifiability).

The advantage of this approach is that protocol data itself can serve as such proof, and thus no additional resources need to be wasted, as is the case for proof-of-work or proof-of-storage.
In the case where the protocol itself does not generate enough data for a proof, even other completely unrelated data exchange between two participants (e.g. BitTorrent traffic) can be used as proof.
Since we only require local verifiability, it is only up to the two communicating parties to agree on what traffic constitutes a proof-of-bandwidth.
In the absolutely worst case, where no otherwise useful data can be exchanged between two participants, they can resort to exchanging garbage data.

\cmt{YAP}{- mention prob in intro\\- explain in more detail how prob and brahms can be used together}

%% We haven't really discussed this yet anywhere else:
%%
%% In the particular case of our banking system, however, this situation is very unlikely to arise.
%% Our protocol exhibits a transaction latency vs. bandwidth trade-off that can be adjusted at each participant individually.
%% Thus, a participant that otherwise does not have enough data to transmit for a proof, might always choose to improve latency by transmitting more data.

%!TEX root = ../main.tex

\section{Sharing Accounts among $k$ processes}
\label{sec:k-consensus}

\cmt{RG}{We need a small intro here to say that blockchain protocols do more than asset transfers between two participants; in particular, they enable to implement contracts representing transfers of assets involving k-users. We show here that doing this does not require solving consensus among the entire set of nodes in the system.}

We now return to the crash-stop shared memory model defined in \cref{sec:result} and consider the general case with an arbitrary owner map $\mu$.
We show that an \textsf{asset-transfer} object's consensus number is the maximal number of processes sharing an account.
More precisely, the consensus number of an \textsf{asset-transfer} object is $\max_{a\in\A}|\mu(a)|$. %the maximal $k$ such that there exists $a\in\A$ with $|\mu(a)|=k$.

We say that an \textsf{asset-transfer} object, defined on a set of
accounts $\A$ with an ownership map $\mu$,
is \emph{k-shared} iff $\max_{a\in\A}|\mu(a)|=k$.
In other words, the object is $k$-shared if $\mu$ allows at least one account to be owned by
$k$ processes, and no account is owned by more than $k$ processes.

We show that the consensus number of any \kasset\ is $k$, which
generalizes our result in \cref{sec:result}.
We first show that \kasset\ has consensus number \emph{at least $k$} by implementing consensus for $k$ processes using only registers and an instance of \kasset.
We then show that \kasset\ has consensus number \emph{at most $k$} by reducing \kasset\ to \kcons, an object known to have consensus number $k$~\cite{JT92}.

% \subsection{\kasset\ has consensus number at least $k$}

\begin{lemma}
  \label{lem:cons-to-at}
  Consensus has a wait-free implementation for $k$ processes in the read-write shared memory model equipped with a single $k$-shared \textsf{asset-transfer} object.
\end{lemma}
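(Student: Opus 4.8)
The plan is to show that $k$ processes $p_1,\dots,p_k$ can solve consensus using only read/write registers together with a single $k$-shared \textsf{asset-transfer} object, thereby witnessing consensus number at least $k$. The whole construction rests on one idea: a shared account that can be drained by exactly one of its owners behaves as a one-shot arbiter, and---unlike a test-and-set object, which has consensus number $2$---the \textsf{asset-transfer} interface additionally lets the winner leave an \emph{identifiable} trace, so that every loser can learn who won. Concretely, I would take the account $a$ with $|\mu(a)|=k$ that $k$-sharedness guarantees, with owners $p_1,\dots,p_k$, and let $s=q_0(a)$ be its (assumed positive) initial balance, a value known to all processes. I would also reserve $k$ distinct destination accounts $b_1,\dots,b_k$ with initial balance $0$, and $k$ single-writer registers $R_1,\dots,R_k$ initialized to $\bot$.

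The protocol for process $p_i$ proposing value $v_i$ would be: first write $v_i$ to $R_i$; then invoke $\transfer(a,b_i,s)$; if it returns $\true$, decide $v_i$; otherwise $\read$ the balances of $b_1,\dots,b_k$, locate the unique index $j$ whose balance became positive, $\read$ $R_j$, and decide the value found there. Since $a$ holds exactly $s$ units and every transfer moves exactly $s$ units, exactly one transfer succeeds in the linearization order of the object; every later one observes balance $0<s$ and fails. Hence exactly one owner $p_j$ wins, and exactly one destination $b_j$ becomes non-zero, which is precisely what lets a losing process recover the winner's identity in a single scan of $\read$s.

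Establishing the three consensus properties would then be largely routine. \emph{Validity} holds because the decided value is always some $R_j$, written by $p_j$, hence a proposed value. \emph{Termination}/wait-freedom holds because each process performs a bounded number of steps (one transfer plus at most $k+1$ reads) with no waiting. \emph{Agreement} is the crux: all deciders must agree on both the winner and the winner's value. The winner $p_j$ decides $v_j$ directly, whereas a loser decides $R_j$, so I must argue that $R_j=v_j$ is already readable at the moment the loser reads it.

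The main obstacle is exactly this last real-time argument, which is where linearizability of the \textsf{asset-transfer} object does the work. I would chain the following precedences: $p_j$ completes the write of $R_j$ before invoking its transfer (program order), so that completion precedes the linearization point of $p_j$'s successful transfer; a loser's transfer fails and is therefore linearized \emph{after} $p_j$'s transfer; and the loser scans the $b_\ell$'s and then reads $R_j$ only after its own transfer completes. Composing these, the loser's read of $R_j$ begins strictly after $p_j$'s write of $R_j$ completes, so it returns $v_j$; the same chain shows the winning deposit into $b_j$ is already visible when the loser scans the destinations, so it identifies the correct winner. I expect the only delicate point to be phrasing these precedences cleanly in terms of linearization points versus real-time operation intervals; the combinatorial core (exactly one success, a unique non-zero destination) follows immediately from the balance of $a$ being drained in one shot.
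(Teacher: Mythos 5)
Your proposal is correct, and its skeleton is the same as the paper's: single-writer registers announce the proposals, a shared account owned by all $k$ processes acts as a one-shot arbiter (exactly one \transfer\ succeeds), and linearizability of the object plus the program order ``write register, then transfer, then read'' gives agreement. The difference is in how the winner's identity is encoded and recovered. You encode it \emph{positionally}: each owner $p_i$ tries to drain the entire balance $s$ into its own dedicated destination account $b_i$, so the unique non-zero account among $b_1,\ldots,b_k$ names the winner, and losers find it by scanning all $k$ destinations. The paper encodes it \emph{arithmetically}: all owners withdraw into a single sink account, but process $p$ withdraws the amount $2k-p$ from an account initialized to $2k$; since $(2k-p)+(2k-q)\geq 2k+1$ for distinct $p,q\in\{1,\ldots,k\}$, at most one transfer can succeed, and the residual balance of the source account, $2k-(2k-q)=q$, directly names the winner. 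This buys the paper a more compact protocol: two accounts instead of $k+1$, a single \read\ instead of a $k$-account scan, and uniform code (winner and losers execute the same three lines, the winner simply reading back its own identity) rather than branching on the outcome of the transfer. Your mutual-exclusion argument is if anything simpler (full drain, versus the pairwise-sum inequality), and your real-time/linearization-point chain for agreement is exactly the argument the paper relies on implicitly. One point to phrase carefully: you ``assume'' the initial balance $s=q_0(a)$ is positive, but this is not an assumption you need to hedge on --- in consensus-number arguments the implementer chooses the object's initialization (the paper likewise initializes the balance to $2k$), so you should simply stipulate $q_0(a)>0$ (note that with $q_0(a)=0$ every transfer of amount $0$ would succeed and agreement would fail, so the stipulation is genuinely needed).
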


\begin{figure}
\hrule \vspace{1mm}
 {\small
\setcounter{linenumber}{0}
\begin{tabbing}
bbb\=bb\=bb\=bb\=bb\=bb\=bb\=bb \=  \kill

Shared variables: \\
\> $R[i], i \in 1,\ldots,k$, $k$ registers, initially $R[i] = \bot, \forall i$\\
\> $AT$, $k$-shared \textsf{asset-transfer} object containing:\\
\>\>\> -- an account $a$ with initial balance $2k$\\
\>\>\>\>owned by processes $1,\ldots,k$\\
\>\>\> -- some account $s$\\
\\
Upon \textit{propose}$(v)$:\\
\nnll\label{ln:simple-announce}\> $R[p].write(v)$\\ 
\nnll\label{ln:simple-transfer}\> $AT.transfer(a, s, 2k - p))$\\
\nnll\label{ln:simple-read}\> \textbf{return} $R[AT.read(a)].read()$

\end{tabbing}
 }
 \hrule
\caption{Wait-free implementation of consensus among $k$ processes using a \kasset. Code for process $p\in\{1,\ldots,k\}$.}
\label{fig:kcons-to-kaccount}
\end{figure}

\begin{proof}

We now provide a wait-free algorithm that solves consensus among $k$ processes using only registers and an instance of $k$-shared \textsf{asset-transfer}.
The algorithm is described in~\cref{fig:kcons-to-kaccount}.
Intuitively, $k$ processes use one shared account $a$ to elect one of them whose input value will be decided.
Before a process $p$ accesses the shared account, $p$ announces its input in a register (line~\ref{ln:simple-announce}).
Process $p$ then tries to perform a transfer from account $a$ to another account.
The amount withdrawn this way from account $a$ is chosen specifically such that:
\begin{enumerate}
\item only one transfer operation can ever succeed, and
\item if the transfer succeeds, the remaining balance on $a$ will uniquely identify process $p$.
\end{enumerate}
To satisfy the above conditions, we initialize the balance of account
$a$ to $2k$ and have each process $p\in\{1,\ldots,k\}$ transfer $2k-p$ (line
\ref{ln:simple-transfer}).
Note that  transfer operations invoked by distinct processes
$p,q\in\{1,\ldots,k\}$ have arguments $2k-p$ and $2k-q$, and
$2k-p+2k-q\geq 2k-k+2k-(k-1)=2k+1$.
The initial balance of $a$ is only $2k$ and no incoming transfers are ever executed.
Therefore, the first transfer operation to be applied to the
object succeeds (no transfer tries to withdraw more then $2k$) and the remaining operations will have to fail due to insufficient balance.
\noindent When $p$ reaches line \ref{ln:simple-read}, at least one transfer must have succeeded:
\begin{enumerate}
\item either $p$'s transfer succeeded, or
\item $p$'s transfer failed due to insufficient balance, in which case
  some other process must have previously succeeded.
\end{enumerate}
Let $q$ be the process whose transfer succeeded. Thus, the balance of account $a$ is $2k - (2k - q) = q$.
Since $q$ performed a transfer operation, by the algorithm, $q$ must
have previously written its proposal to the register $R[q]$.
Regardless of whether $p = q$ or $p \neq q$, reading the balance of account $a$ returns $q$ and $p$ decides the value of $R[q]$.
\end{proof}

To prove that \kasset\ has consensus number at most $k$, we reduce \kasset\ to \kcons.
A \kcons\ object exports a single operation \emph{propose} that, the first $k$ times it is invoked, returns the argument of the first invocation.
All subsequent invocations return $\bot$.
Given that \kcons\ is known to have consensus number exactly
$k$~\cite{JT92}, a wait-free algorithm implementing \kasset\ using
only registers and \kcons\ objects implies that the consensus number of \kasset\ is not more than $k$.

\begin{figure}
\hrule \vspace{1mm}
 {\small
\setcounter{linenumber}{0}
\begin{tabbing}
bbb\=bb\=bb\=bb\=bb\=bb\=bb\=bb \=  \kill
Shared variables:\\
\> $AS$, atomic snapshot object\\
%\>\> $AS.\textit{update}[p, a](v)$ stores value $v$ in $p$'s entry for $a$, $a \in \{a : p \in \mu(a)\}$.\\
\> for each $a\in\A$:\\
\>\> $R_a[i], i \in \Pi$, registers, initially $[\bot, \ldots,\bot]$\\
\>\> $kC_a[i], i \geq 0$, list of instances of \kcons\ objects\\
\\
Local variables:\\
\>\> $\textit{hist}$: a set of completed trasfers, initially empty\\
\>for each $a\in \A$:\\
\>\> \textit{committed}$_a$, initially $\emptyset$\\
%\>\> $\textit{hist}_a$: a list of trasfers, initially empty\\
\>\> \textit{round}$_a$, initially $0$\\
\\
Upon \textit{transfer}$(a, b, x)$:\\
\nnll\> \textbf{if} $p \notin \mu(a)$ \textbf{then}\\
\nnll\>\> \textbf{return} {\false}\\ 
\nnll\> $tx = (a, b, x, p,\textit{round}_a)$\\
\nnll\label{alg:kwaitfree:register}\> $R_{a}[p].write(tx)$\\
\nnll\> $\textit{collected} = \textit{collect}(a) \setminus \textit{committed}_{a}$\\
\nnll\> \textbf{while} $tx \in \textit{collected}$ \textbf{do} \\%$\textit{collected} \neq \emptyset$ \textbf{do} //what about while $tx \in \textit{collected}$?\\
\nnll\label{alg:kwaitfree:picknext}\>\> $\textit{req} = $ the oldest transfer in $\textit{collected}$\\
\nnll\label{alg:kwaitfree:snapshot}\>\> $\textit{prop} = \textit{proposal}(\textit{req}, AS.snapshot())$\\
\nnll\label{alg:kwaitfree:decide}\>\> $\textit{decision} = kC_{a}[\textit{round}_a].propose(prop)$\\
\nnll\>\> $\textit{hist} = \textit{hist} \cup \{\textit{decision}\}$\\
\nnll\label{alg:kwaitfree:update}\>\> $AS.\textit{update}(\textit{hist})$\\
\nnll\>\> $\textit{committed}_a = \textit{committed}_a \cup \{t : \textit{decision}=(t,*)\}$\\
\nnll\>\> $\textit{collected} = \textit{collected} \setminus \textit{committed}_a$\\
\nnll\>\> $\textit{round}_a = \textit{round}_a + 1$\\
\nnll\label{alg:kwaitfree:return}\> \textbf{if} $(tx, \texttt{success}) \in \textit{hist}$ \textbf{then}\\
\nnll\label{alg:kwaitfree:response-true}\>\> \textbf{return} $\true$\\
\nnll\> \textbf{else}\\
\nnll\label{alg:kwaitfree:response-false}\>\> \textbf{return} $\false$\\
\\
Upon \textit{read}$(a)$:\\
\nnll\label{alg:kwaitfree:read-snapshot}\> \textbf{return} \textit{balance}$(a, AS.snapshot())$\\
\\
\textit{collect(a)}:\\
\nnll\> $\textit{collected} = \emptyset$\\
\nnll\> \textbf{for all} $i = \Pi$ \textbf{do}\\
\nnll\>\> \textbf{if} $R_a[i].read() \neq \bot$ \textbf{then}\\
\nnll\>\>\> $\textit{collected} = \textit{collected} \cup \{R_a[i].read()\}$\\
\nnll\> \textbf{return} $collected$\\
\\
\textit{proposal}$((a,b,q,x), \textit{snapshot})$:\\
\nnll\> \textbf{if} $\textit{balance}(a, \textit{snapshot}) \geq x$ \textbf{then}\\
\nnll\>\> $\textit{prop} =  ((a, b, q, x), \texttt{success})$\\
\nnll\> \textbf{else}\\
\nnll\>\> $\textit{prop} =  ((a, b, q, x), \texttt{failure})$\\
\nnll\> \textbf{return} $prop$\\
\\
\textit{balance(a, snapshot)}:\\
\nnll\> $\textit{incoming} = \{tx:  tx = (*, a, *,*,*) \wedge (tx, \texttt{success}) \in \textit{snapshot}\}$\\
\nnll\> $\textit{outgoing} = \{tx:  tx= (a, *, *,*,*) \wedge (tx, \texttt{success}) \in \textit{snapshot}\}$\\
\nnll\> \textbf{return} $q_0(a) + \left(\sum_{(*, a, x,*,*) \in \textit{incoming}} x\right) - \left(\sum_{(a, *, x,*,*) \in \textit{outgoing}} x\right)$
\end{tabbing}
 }
 \hrule
\caption{Wait-free implementation of a {\kasset} object using
  \kcons\ objects. Code for process $p$.}
\label{fig:kaccount-to-kcons}
\end{figure}

The algorithm reducing $k$-shared \textsf{asset-transfer} to \kcons\ is given in~\cref{fig:kaccount-to-kcons}.
Before presenting a formal correctness argument, we first informally explain the intuition of the algorithm.
In our reduction, we associate a series of \kcons\
objects with every account $a$.
Up to $k$ owners of $a$ use the \kcons\
objects to agree on the order of outgoing transfers for $a$.

We maintain the state of the implemented \kasset\ object is maintained using an
atomic snapshot object $AS$.
%[[PK Not really needed 
%with multiple entries assigned to a single process.
%]]
Every process $p$ uses a distinct entry of $AS$ to store a set $\textit{hist}$. $\textit{hist}$ is a subset of all completed outgoing transfers from accounts that $p$ owns (and thus is allowed to debit).
For example, if $p$ is the owner of accounts $d$ and $e$, $p$'s $\textit{hist}$ contains outgoing transfers from $d$ and $e$.
%, where every
%transfer in the list is equipped
%with a result (\texttt{success} or \texttt{failure}).
Each element in the $\textit{hist}$ set is represented as $((a, b, x, s, r), \textit{result})$,
where $a, b$, and $x$ are the respective source account, destination account, and the amount transferred, $s$ is the originator of the transfer,
and $r$ is the \emph{round} in which the transfer  was invoked by the originator.
The value of $\textit{result} \in \{\texttt{success}, \texttt{failure}\}$ indicates whether the transfer succeeds or fails.
%% Each process $p$, for every account $a$ that $p$ is allowed to debit
%% (i.e., such that $p \in \mu(a)$), stores this set in its entry of $AS$.
A transfer becomes ``visible'' when any process inserts it in its corresponding entry of $AS$.

To read the balance of account $a$, a process takes a snapshot of
$AS$, and then sums the initial balance $q_0(a)$ and amounts of all successful incoming transfers,
and subtracts the amounts of successful outgoing transfers found in $AS$.
We say that a successful transfer $tx$ is in a snapshot $AS$ (denoted by $(tx, \texttt{success}) \in AS$) if there exists an entry $e$ in $AS$ such that $(tx, \texttt{success}) \in AS[e]$.

To execute a transfer $o$ outgoing from account $a$, a process $p$ first
announces $o$ in a register $R_a$ that can be written by $p$ and read
by any other process.
This enables a ``helping'' mechanism needed to ensure wait-freedom
to the owners of $a$~\cite{Her91}.
%Announcing $tx$ is crucial for ensuring wait-freedom through ``helping'', as we explain below.
%After announcing $tx$, $p$ collects all transfers other processes might have announced.
%This makes sure that every process will eventually learn about each transfer and propose it for ordering.

Next, $p$ collects the transfers proposed by other owners and
tries to agree on the order of the collected transfers and their
results using a series of {\kcons} objects.

A transfer-result pair as a proposal for the next
instance of \kcons\ is chosen as follows.
Process $p$ picks the ``oldest'' collected but not yet committed
operation (based on the round number $\textit{round}_a$ attached to
the transfer operation when a process announces it; ties are broken using process IDs).   
Then $p$ takes a snapshot of $AS$ and checks whether account $a$ has sufficient balance according to the state represented by the snapshot, and equips the transfer with a corresponding \texttt{success} / \texttt{failure} flag.
The resulting transfer-result pair constitutes $p$'s proposal for the next instance of \kcons.
%In case $p$'s proposal is decided, the linearization points of these
%transfers immediately follow  the linearization point of the taken snapshot $S$.
%
%[[PK pretty standard,,
\ignore{
For each round, $p$ appends to its copy of $a$ the decided-upon list of transfer-result pairs for that round.
$p$ keeps executing rounds of \kcons\ until $tx$ has been included in the decision of some round.
This can happen either by $p$'s proposal being decided in some round
or by other processes including $tx$ in their proposals and those proposals being decided (possibly over the course of several rounds).
$p$ locally keeps track of all transfers that have been agreed upon so far and excludes them from the set of collected transfers before constructing its proposal in each round.

Note that the transfer $tx$ announced by $p$ will eventually be included in some decision.
If $p$'s proposal is eventually decided, $tx$ will be decided as part of $p$'s proposal (by the algorithm, $p$ always collects its own transfer).
If $p$ executes alone, $p$'s proposal will eventually be decided in some round, namely the first round for which no other process has invoked the \kcons\ object yet.
The only case where $p$'s proposal is never decided is when at least one other process $q$ successfully invokes \kcons\ infinitely many times, with $q$'s proposal being decided.
In this case, however, $tx$ will eventually also be collected by $q$, and thus become part of $q$'s proposal.
Thus, $p$'s \textit{transfer} operation will always eventually terminate.
}
The currently executed transfer by process $p$ returns as soon as
it is decided by a \kcons\ object, the flag of the decided value (\textit{success}/\textit{failure})
indicating the transfer's response (\true/\false).

\begin{lemma}
  \label{lem:at-to-cons}
  The \kasset\ object type has a wait-free implementation in the
  read-write shared memory model equipped with $k$-consensus objects.
\end{lemma}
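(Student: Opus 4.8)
The plan is to show that the algorithm in \cref{fig:kaccount-to-kcons} is a wait-free and linearizable implementation of the \kasset\ type. The backbone of both arguments is a structural invariant about the per-account consensus instances: for each account $a$, the instances $kC_a[0], kC_a[1], \ldots$ impose a single agreed sequence of (transfer, flag) decisions $d_0^a, d_1^a, \ldots$ on the \emph{outgoing} transfers of $a$. I would first establish that every owner of $a$ traverses these instances in order and that a process's counter $\textit{round}_a$ always equals the number of $a$-transfers it has already committed; hence a process proposing to $kC_a[r]$ has previously decided and $AS.\textit{update}$-d exactly $d_0^a,\ldots,d_{r-1}^a$. Because at most $k$ processes own $a$ and a \kcons\ object returns its decided value to each of its first $k$ proposers, a lagging owner always \emph{catches up} by replaying already-decided values rather than blocking, and no transfer is ever decided at two positions (once $(t,\cdot)$ is returned by $kC_a[r]$, every owner reaching round $r$ learns it and places $t$ in $\textit{committed}_a$).

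For \textbf{wait-freedom} I would argue that each \transfer$(a,\cdot,\cdot)$ operation terminates after finitely many rounds. A process announces its transfer $tx$ in $R_a[p]$, assigning it a fixed priority given by the pair (round number, process id) carried in $tx$; this order is well-founded and finite-below, since round values range over $\Nat$, the process set is finite, and a process's successive transfers on $a$ carry strictly increasing round numbers. The \emph{helping} mechanism~\cite{Her91} guarantees that once $tx$ is announced, every fresh invocation of $\textit{collect}(a)$ observes it. The key step is to prove, by induction on this priority order, that every announced transfer is eventually decided: assuming all higher-priority (older) transfers are eventually committed---of which there are only finitely many---I would show that any process invoking \kcons\ objects forever must complete infinitely many operations, hence perform infinitely many fresh collects, eventually collect $tx$, and (once all older transfers are committed) propose $tx$ as its oldest uncommitted transfer, forcing $tx$ to be decided. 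I expect this to be the \textbf{main obstacle}: the difficulty is ruling out an infinite starvation run in which concurrently announced lower-priority transfers, or the stale collected sets of in-flight operations (of which at most $k$ exist when $tx$ is written), indefinitely postpone $tx$; the argument must combine the finite-below priority order with the facts that each operation proposes from a fixed, size-$\leq k$ collected set and that any non-terminating process necessarily re-collects.

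For \textbf{linearizability} I would reuse the technique of Theorem~\ref{th:waitfree}, taking linearization points at accesses to the (linearizable, wait-free~\cite{AADGMS93}) atomic snapshot $AS$: a successful transfer linearizes at the \emph{earliest} $AS.\textit{update}$ whose written $\textit{hist}$ contains its \texttt{success} decision; a \read$(a)$ at its $AS.\textit{snapshot}$; and a failed transfer at the $AS.\textit{snapshot}$ used by the proposal that fixed its \texttt{failure} flag. I would first check that all these points fall within the real-time interval of their operations (for failed transfers this uses that the deciding proposal is taken after $tx$ is announced and before it is returned), so real-time order is preserved. Legality then follows by induction along the $AS$ order: for a successful $d_j^a=(a,b,x)$, the deciding proposer's snapshot already reflects $d_0^a,\ldots,d_{j-1}^a$ (all outgoing transfers of $a$ ordered before it, by the invariant) together with a \emph{subset} of $a$'s incoming transfers, so its computed $\textit{balance}(a,\cdot)\geq x$ underestimates the true balance at the linearization point---which only accumulates more incoming amounts---guaranteeing non-negativity exactly as in Theorem~\ref{th:waitfree}; for a failed transfer or a read the snapshot is a genuine atomic snapshot, so the observed balance equals the balance induced by the earlier-linearized operations, making \false\ (respectively the returned value) legal at that point. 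Assembling these points into a single sequential history yields a legal history compatible with $\prec_H$, establishing linearizability and completing the proof.
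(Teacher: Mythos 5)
Your proposal follows essentially the same route as the paper's proof: the identical choice of linearization points (a successful transfer at the first $AS.\textit{update}$ containing its decision, a \read\ at its own snapshot, a failed transfer at the snapshot taken by the process whose proposal fixed its \texttt{failure} flag), the identical legality induction based on the deciding snapshot containing \emph{all} outgoing and a \emph{subset} of incoming transfers for the account, and a wait-freedom argument built on the register announcement, the helping mechanism, and round-number priority. If anything, your treatment of the starvation obstacle in the wait-freedom part is more detailed than the paper's, which compresses that step into the assertion that the starved transfer eventually becomes the lowest-round announced request and then wins some \kcons\ instance by validity.
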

\begin{proof}
  We essentially follow the footpath of the proof of Theorem~\ref{th:waitfree}. 
  Fix an execution $E$ of the algorithm in
  Figure~\ref{fig:kaccount-to-kcons}.
Let $H$ be the history of $E$. 

To perform a transfer $o$ on an account $a$, $p$ \emph{registers} it in $R_a[p]$
(line~\ref{alg:kwaitfree:register}) and then proceeds through a series
of {\kcons} objects, each time collecting $R_{a}$ to learn about the
transfers concurrently proposed by other owners of $a$.
Recall that each {\kcons} object is wait-free. 
Suppose, by contradiction, that $o$ is registered in $R_{a}$ but is
never decided by any  instance of {\kcons}.
Eventually, however, $o$ becomes the request with the lowest round
number in $R_{a}$ and, thus, some instance of {\kcons} will be
only accessed with $o$ as a proposed value (line~\ref{alg:kwaitfree:decide}).
By validity of {\kcons}, this instance will return $o$ and, thus, $p$ will be able to complete $o$.       

Let $\textit{Ops}$ be the set of all complete operations and
all {\transfer} operations $o$ such that some process
completed the update operation  (line~\ref{alg:kwaitfree:update}) in
$E$ with an argument including $o$
(the atomic snapshot and {\kcons}
operation has been linearized).
Intuitively, we include in $\textit{Ops}$ all operations that
\emph{took effect}, either by returning a response to the user or by
affecting other operations.
Recall that every such {\transfer} operation was 
agreed upon in an instance of {\kcons}, let it be  $kC^{o}$.
Therefore, for every such {\transfer} operation $o$, we can identify the
process $q^{o}$ whose proposal has been decided in that instance. 
We now determine a completion of $H$ and, for each $o\in\textit{Ops}$, we define a linearization point as
follows:

\begin{itemize}

  \item If $o$ is a {\read} operation, it linearizes at the
    linearization point of the snapshot operation (line~\ref{alg:kwaitfree:read-snapshot}).

 \item   If $o$ is a {\transfer} operation that returns {\false},
    it linearizes at the linearization point of the snapshot operation (line~\ref{alg:kwaitfree:snapshot})
    performed by $q^{o}$ just before it invoked
   $kC^{o}.\textit{propose}()$.
    
 \item   If $o$ is a {\transfer} operation that some process
included in the update operation (line~\ref{alg:kwaitfree:update}), it linearizes at the
    linearization point of the \emph{first} update operation in $H$
    (line~\ref{alg:kwaitfree:update}) that includes~$o$.
 Furthermore, if $o$ is incomplete in $H$, we complete it with response $\true$.

\end{itemize}  

Let $\bar H$ be the resulting complete history and let $L$ be the sequence
of complete operations of $\bar H$ in the order of their
linearization points in $E$.
Note that, by the way we linearize operations, the linearization of a
prefix of $E$ is a prefix of $L$.
Also, by construction, the linearization point of an operation
belongs to its interval.  

Now we show that $L$ is legal and, thus, $H$ is
linearizable.
We proceed by induction, starting with the empty (trivially legal)
prefix of $L$.
Let $L_{\ell}$ be the legal prefix of the first $\ell$ operation and 
$op$ be the $(\ell+1)$st operation of $L$.
Let $op$ be invoked by process $p$.
The following cases are possible:

\begin{itemize}

\item $op$ is a {\read}$(a)$: the snapshot taken at $op$'s linearization point
  contains all successful transfers concerning $a$ in $L_{\ell}$. By
  the induction hypothesis, the resulting balance is non-negative.  

\item $op$ is a failed {\transfer}$(a,b,x)$: the snapshot taken at the linearization point of $op$
  contains all successful transfers concerning $a$ in $L_{\ell}$. By
  the induction hypothesis, the balance corresponding to this snapshot
   non-negative. By the algorithm, the balance is  less than $x$. 
 
\item  $op$ is a successful  {\transfer}$(a,b,x)$.
  Let $L_{s}$, $s\leq\ell$, be the prefix of $L_{\ell}$ that only
  contains operations linearized before the moment of time when $q^{o}$
  has taken the snapshot just before accessing $kC^{o}$.

  As before accessing $kC^{o}$, $q$ went through all preceding
  {\kcons} objects associated with $a$ and put the decided values in
  $AS$,  $L_{s}$ must include \emph{all}  outgoing {\transfer}
  operations for $a$. Furthermore, $L_s$ includes a \emph{subset} of
  all incoming transfers on $a$.
  Thus, $\textit{balance}(a,L_k) \leq \textit{balance}(a,L_{\ell})$.
  
  By the algorithm, as $op={\transfer}(a,b,x)$ succeeds, we have  
  $\textit{balance}(a,L_k)\geq x$.
  Thus, $\textit{balance}(a,L_{\ell})\geq x$ and the resulting balance
  in $L_{\ell+1}$ is non-negative.

\end{itemize}    

Thus, $H$ is linearizable. 
\end{proof}  

%% \begin{corollary}
%%   \label{cor:at-most-k}
%%   A $k$-shared \textsf{asset-transfer} object has consensus number at most $k$.
%% \end{corollary}

\begin{theorem}
  A $k$-shared \textsf{asset-transfer} object has consensus number $k$.
\end{theorem}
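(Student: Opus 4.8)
The plan is to assemble the theorem directly from the two lemmas just established, together with the standard characterization of consensus number as the largest $n$ for which $n$ processes admit a wait-free consensus protocol using only registers and instances of the object in question. Concretely, I would argue the lower bound and the upper bound separately and then combine them.

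For the lower bound, I would simply invoke Lemma~\ref{lem:cons-to-at}: it exhibits a wait-free protocol solving consensus among $k$ processes using a single $k$-shared \textsf{asset-transfer} object and read-write registers. By definition of consensus number, this immediately gives that the consensus number of a \kasset\ object is \emph{at least} $k$.

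For the upper bound, I would appeal to Lemma~\ref{lem:at-to-cons}, which provides a wait-free implementation of a \kasset\ object from registers and \kcons\ objects. The key point — and the one place where care is needed — is the robustness argument: since \kcons\ is known to have consensus number exactly $k$~\cite{JT92}, no collection of \kcons\ objects and registers can solve consensus among $k+1$ processes. Suppose, for contradiction, that the \kasset\ object had consensus number strictly greater than $k$; then $k+1$ processes could solve consensus using \kasset\ objects and registers. Substituting, in that protocol, each \kasset\ object by its wait-free implementation from Lemma~\ref{lem:at-to-cons} would yield a wait-free solution to $(k+1)$-process consensus using only \kcons\ objects and registers, contradicting the consensus number of \kcons. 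Hence the consensus number of \kasset\ is \emph{at most} $k$.

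Combining the two bounds, the consensus number of a $k$-shared \textsf{asset-transfer} object is exactly $k$. I expect no serious obstacle here, as the heavy lifting was already done in the two lemmas; the only subtlety worth stating explicitly is that consensus number is preserved downward under wait-free reductions, which justifies transferring the upper bound from \kcons\ to \kasset\ via the implementation of Lemma~\ref{lem:at-to-cons}.
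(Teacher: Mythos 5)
Your proposal is correct and follows essentially the same route as the paper: the lower bound is Lemma~\ref{lem:cons-to-at}, the upper bound is Lemma~\ref{lem:at-to-cons} combined with the fact that \kcons\ has consensus number exactly $k$~\cite{JT92}, and the paper itself states the same implementation-substitution justification (which you spell out as an explicit contradiction argument) in the prose preceding that lemma. No gaps; your added care about wait-free, linearizable substitution preserving consensus protocols is exactly the right point to make explicit.
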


\begin{proof}
  It follows directly from \cref{lem:cons-to-at} that $k$-shared \textsf{asset-transfer} has consensus number at least $k$.
  Moreover, it follows from \cref{lem:at-to-cons} that $k$-shared \textsf{asset-transfer} has consensus number at most $k$.
  Thus, the consensus number of $k$-shared \textsf{asset-transfer} is exactly $k$.
\end{proof}

%!TEX root = ../main.tex

\section{Related Work}
\label{sec:related-work}

In this section we discuss related work with regards to various parts of \name.
We start by considering other asset transfer systems, both for the private (permissioned) and public (permissionless) setting.
We then elaborate on ordering of inputs, and finally we discuss work related to secure broadcast, which is an important building block of \name.

\subsection{Asset Transfer Systems}
\label{sec:sys-related}
The different flavors of \name\ are suitable for both a private (permissioned) and public (permissionless) setting.
A private setting implies the assumption of an access control mechanism, specifying who is allowed to participate in the system.
In this case, we assume that this mechanism is external to the system itself.
%[[PK
%It is typical for protocols which address the private model, such as
Private protocols, such as Corda~\cite{he16corda}, Hyperledger Fabric~\cite{hyperledger}, or Vegvisir~\cite{kar18vegvisir} rely on such a mechanism.
%]]

Importantly, the access control mechanism rules out the possibility of Sibyl attacks~\cite{do02sybil}, where a malicious party can take control over a system by using many identities, toppling the one third assumption on the fraction of Byzantine participants.
Once this is done, the malicious party can engage in a double-spending attack.

Decentralized systems for the public, i.e., permissionless, setting are open to the world.
They do not have an explicit access control mechanism and allow anyone to join.
Systems which fall into this category include Bitcoin~\cite{nakamotobitcoin}, Ethereum \cite{ethereum}, Avalanche~\cite{rocket}, ByzCoin~\cite{kogi16byzcoin}, Algorand~\cite{gilad2017algorand}, Hybrid consensus~\cite{pa17hybrid}, PeerCensus~\cite{decker2016bitcoin}, or Solida~\cite{abr16solida}.
To prevent malicious parties from overtaking the system, these systems rely on Sybil-proof techniques, e.g., proof-of-work~\cite{nakamotobitcoin}, or proof-of-stake~\cite{be16cryptocurrencies}.

These systems, whether they address the permissionless or the permissioned environment, seek to solve consensus in their implementation.
It is worth noting that many of these solutions can allow for more than just transfers, and enable access to smart contracts.
%% \needsrev{As we discussed earlier (\Cref{sec:smart-contracts}), \name{} is general enough to handle various kinds of smart contracts.}
Our focus is, however, on decentralized transfer systems, and the surprising result of this paper is that we can implement such a system without resorting to consensus.

% For instance, systems such as Ethereum~\cite{ethereum} generalize the notion of transfers to that of arbitrary operations on the system state, also known as smart contracts.
% In this work we do not aim to implement smart contracts, for which consensus is indeed necessary.
% We conjecture that this result applies equally to the public (permissionless) environment.

To deploy \name\ in a public setting where anybody can participate, Sibyl attacks need to be addressed and the system needs to be scalable to accommodate many participants.
This is possible by using a Sybil-resistant and scalable implementation of secure broadcast, while keeping the transfer algorithm (see~\Cref{fig:banking-relaxed}) unchanged.
To the best of our knowledge, there is no such implementation of secure broadcast in the literature so far.
% Concurrently with this work, we are also designing and implementing a scalable, probabilistic, Sibyl-resistant algorithm for secure broadcast.
%[[PK not sure why we need to say this
%, which is a work on its own and is out of the scope of this paper.
%]]
Our probabilistic secure broadcast implementation is scalable; transfers can be accepted within $O(\log(N))$ message delays
ensuring security with overwhelming probability.
We also show how to make our protocol Sybil-resistant (see \cref{section:permissionless}) and thus deployable in an open (permissionless) environment.

% This is in contrast to Bitcoin and protocols for the permissionless setup, which typically ensure probabilistic guarantees, given the nature of the algorithms which they employ.
% With overwhelming probability, the latter class of systems ensure that double-spending will not occur.
% \needsrev{As ongoing work, we are currently looking at what are the necessary modifications to our system to deploy it in a permissionless setup.}

% Whether they are in the permissioned or permissionless setup, these systems typically build on consensus.
% Like these systems, we ensure deterministic safety and liveness guarantees.

\subsection{Ordering Constraints}

In the blockchain ecosystem, there exist several efforts to avoid building a totally ordered chain of transfers.
The idea is to replace the totally ordered linear structure of a blockchain with that of a directed acyclic graph (DAG) for structuring the transfers in the system.
Notable systems in this spirit include Byteball~\cite{chu6byteball}, Vegvisir~\cite{kar18vegvisir}, the GHOST protocol~\cite{som13accelerating}, Corda~\cite{he16corda}, or Nano~\cite{lemahieu2018nano}.
Even if these systems use a DAG to replace the classic blockchain, their algorithms still employ consensus.
As we show here,  total order (obtained via consensus) is not necessary for implementing decentralized transfers.

We can also use a DAG to characterize the relation between transfers in \name, but we do not resort to solving consensus to build the DAG, nor do we use the DAG in order to solve consensus.
More precisely, we can regard each account as having an individual history.
Each such history is managed by the corresponding account owner without depending on a global view of the system state.
Another way to characterize \name{} is that we do not build a totally ordered sequence of all transfers in the system (like a classic blockchain~\cite{nakamotobitcoin}).
Instead we maintain a per-owner sequence of transfers.
Each sequence is ordered individually (by its respective owner), and is loosely coupled with other sequences (through dependencies established by causality).
% These transfers depend on each other according to an ordering constraint---based on \emph{l-causality}---which we enforce using a weak layer on top of secure broadcast.
%% Furthermore, in \name{} there is no need for a common root---a genesis block---as it is typical in other blockchain-based systems~\cite{hyperledger,kar18vegvisir,so15secure}.

Another parallel we can draw is between the accounts in our system (as we defined them in~\Cref{sec:def-sequential}) and conflict-free replicated data types (CRDTs)~\cite{sh11crdt}.
Specifically, similar to a CRDT, in \name{} we support concurrent updates on different accounts while preserving consistency.
Since each account has a unique owner, this rules out the possibility of conflicting operations on each (correct) account.
In turn, this ensures that the state at correct nodes always converges to a consistent version.
In the terminology of \cite{sh11crdt}, we provide strong eventual consistency.

% A central feature in \name{} is that accounts have a unique owner.
% We leverage this design point towards sidestepping total order, and using a weaker ordering which is established from the causality relation between transfers.

As we explained earlier, the ordering among transfers in \name is based on causality, as defined through the happened-before relationship of Lamport~\cite{lam78time}.
% This weakening allows us to have a small set of dependencies for each operation, in contrast to
Various causally-consistent algorithms exist~\cite{cac11intro,llo11cops}.
One problematic aspect in these algorithms is that the metadata associated with tracking dependencies can be a burden~\cite{ak2016cure,bail12potential,me17causal}.
This happens because such algorithms track all \emph{potential} causal dependencies.
In our \name algorithm for the message passing model (\Cref{fig:banking-relaxed}) we track dependencies \emph{explicitly}~\cite{bail12potential}, permitting a more efficient implementation with a smaller set of dependencies.
More concretely, we specify that each transfer outgoing from an account only depends on previous transfers outgoing from and incoming to that---and only that---account, ignoring the transfers that affect other (irrelevant) accounts.

% In other words, the causal-ordering ordering based on \emph{l-causality} is closely related to explicit causality~\cite{bail12potential}.
% In explicit causality, the dependency set for an operation are application-dependent.
% For instance, in a Twitter-like platform, when a user posts a new reply, the dependency set for this event can consist of only the original thread for that reply.

\subsection{Asynchronous Agreement Protocols}

The important insight that an asynchronous broadcast-style abstraction suffices for transfers appears in the literature as early as 2002, due to Pedone and Schiper~\cite{ped02handling}.
Duan et. al.~\cite{du18beat} introduce efficient Byzantine fault-tolerant protocols for storage and also build on this insight, as does recent work by Gupta~\cite{gup16nonconsensus} on financial transfers which is the closest to us.
To the best of our knowledge, however, we are the first to consider this insight formally, prove it, and build practical systems around consensusless algorithms.
% some researchers have hinted that total lrder is not necessary but never proved [mike BEAT, generalized Sciper]

An important strength of the whole \name class of algorithms is that they are asynchronous, meaning that they do not have to rely inherently on timeouts to ensure liveness guarantees.
This is in contrast to deterministic consensus-based solutions.
Such solutions have to resort to fine-tuned timeout parameters which affect their performance~\cite{bess14state,mil16honeybadger}.

Asynchronous protocols for consensus exist, but they typically employ heavy cryptography relying on randomization to overcome the FLP impossibility~\cite{FLP85,kin10scalable}.
A few recent efforts are trying to make these protocols more efficient~\cite{ali18communication,baz18clairvo,du18beat,mil16honeybadger}.
These asynchronous protocol are designed for a relatively small- to medium-scale, similar to our deterministic algorithm \namedet.
Compared to \namedet, asynchronous consensus protocols have higher complexity and ensure probabilistic guarantees.

\section{Conclusions}
\label{sec:conclusions}

\cmt{YAP}{add complexity comparison (for deterministic and probabilistic approach) wrt to \# mst or \# bits exchanged per transaction}

In this paper we revisited the problem of implementing a decentralized asset transfer system.
Since the rise of the Bitcoin cryptocurrency, this problem has garnered significant innovation.
Most of the innovation, however, has focused on improving the original solution which Bitcoin proposed, namely, that of using a consensus mechanism to build a blockchain where transactions across the whole system are totally ordered.

We did not aim to investigate current consensus-based solutions and push the envelope on performance or other metrics.
Instead, we showed that we can implement a decentralized transfer system without resorting to consensus.
To this end, we first precisely defined the transfer system object type and proved that it has consensus number \emph{one} if a single account is not shared by multiple processes.
That is, it occupies the lowest rank of the consensus hierarchy and can be implemented without the need for solving consensus.
We proved this surprising result by borrowing from the theory of concurrent objects, and then leveraged the result to build \name{}, the first consensusless transfer system.

A consensusless solution for decentralized transfers has multiple advantages.
Concretely, the transfer algorithm in \name{} is not subject to the FLP impossibility~\cite{FLP85}.
Additionally, \name{} is significantly simpler than consensus-based solutions (because it relies on a simpler secure broadcast primitive) and exhibits higher performance.

\cmt{YAP}{this and the following sentences are not understandable without more information, e.g., explaining 100 replicas vs system size..
and why BFT-smart and not another recent consensus algorithm designed for virtual currencies has been used for the comparison.}
We compared the performance of \name{} to that of a transfer system based on BFT-Smart, a state-of-the-art consensus-based state machine replication system.
\name\ provides performance superior to that of the consensus-based solution.
In systems of up to $100$ replicas, regardless of system size, we observed a throughput improvement ranging from $1.5x$ to $6x$, while achieving a decrease in latency of up to $2x$.

We implemented \namedet\ for a private (permissioned) environment that is prone to Byzantine failures and where participants do not need to trust each other.
This implementation is easily extensible to the large-scale permissionless setting, and this is the focus of our concurrent work.
\cmt{YAP}{sub-second transfer execution with probabilistic algo?\\
needs to be compared to other recent implementations of ledger latencies.. (elastico: 800s, omniledger: 14s, rapidchain: 8.5s)\\
For throughput, other approaches have around 3000-8000 tx/s\\
(Expected time to failure up to 4500 years)}
Our preliminary results are highly encouraging.
Most notably, we can obtain sub-second transfer execution on a global scale deployment of thousands of nodes.

\clearpage
\section*{Acknowledgements}
We are thankful to colleagues in the Distributed Computing Lab, as well as members from other groups at EPFL, for many insightful discussions that helped us understand and develop the ideas in this paper.
Discussions with Yvonne-Anne Pignolet have been instrumental in clarifying the limitations of our approach and working out the relation between smart contracts and asset transfer. 
Alex Auvolat-Bernstein discovered a flaw in our \namedet algorithm and helped us rectify it.
We also thank Rida Bazzi for pointing out recent work very close to \namedet .

\bibliographystyle{acm}
\bibliography{refs}

\newpage
\begin{appendices}
% \input{sections/appendices/shmem}
%!TEX root = ../../main.tex

\section{Secure broadcast}
\label{app:secure-broadcast}

A central building block of our \namemp algorithm is the secure broadcast primitive.
We now give a more formal discussion of this primitive, which we use for establishing the correctness of \namemp .

Intuitively, secure broadcast ensures that correct processes receive the same sequence of messages broadcast from a given sender, even if that sender is malicious (Byzantine).
There are multiple algorithms for secure broadcast,
exhibiting various tradeoffs (\Cref{sec:sb}).
In the rest of this discussion, we draw directly from the pioneering work of Malkhi and Reiter~\cite{MR97srm} (called secure reliable multicast therein).

\begin{sloppy}
Formally, let $\M$ be a set of \emph{messages}.
%and let $\CF\subseteq \M\times\M$ be a binary \emph{conflict} relation on $\M$.
A secure broadcast protocol exports a $\textit{broadcast}(m)$ primitive, where $m\in\M$, and triggers callback events $\textit{deliver}(p,m)$, where $m\in \M$ and $p\in \Pi$.
The protocol satisfies the following properties:
\end{sloppy}

\begin{itemize}
\item \textbf{Integrity:} a correct process executes $\textit{deliver}(p,m)$
  at most once, and, in case the sender process $p$ is benign, only if $p$ called $\textit{broadcast}(m)$.

\item \textbf{Agreement:} if $p$ and $q$ are correct and $p$ executes $\textit{deliver}(r,m)$, then $q$ eventually executes $\textit{deliver}(r,m)$.

\item \textbf{Validity:} if $p$ is correct and executes $\textit{broadcast}(m)$, then $p$ eventually executes $\textit{deliver}(p,m)$.

\item \textbf{Source Order:} if $p$ and $q$ are benign and $p$ executes
  $\textit{deliver}(r,m)$ before $\textit{deliver}(r,m')$, then
  $q$ does not execute $\textit{deliver}(r,m')$ before executing
  $\textit{deliver}(r,m)$. Moreover, if $r$ is benign and
  broadcasts $m$ and afterwards broadcasts $m'$, then no benign process delivers these two
  messages in the opposite order.

\end{itemize}

Note that, unlike Byzantine agreement~\cite{la82byzantine}, secure broadcast
does not guarantee that \emph{some} message is delivered when the sender
is faulty.

%A possible implementation of secure broadcast
%in an asynchronous message-passing system with less than one third faulty processes
%associates each delivered message with a sufficiently large set of
%processes.
% We describe below a simplified version of the
%   ``chained'' secure reliable multicast protocol by Malkhi and
%   Reiter~\cite{MR97srm}.

Intuitively, we need to make sure that a message is delivered if and
only if a \emph{quorum} of more than two thirds of processes
\emph{accepted} it.
Every delivered message must be equipped with a \emph{proof}: a set of acknowledgements cryptographically
signed by a quorum of processes proving that they accepted that
message.
%[[PK feels too wordy
%To ensure that no process can cheat about this, whenever
%a correct process delivers a message it resends it to everybody,
%equipped with a proof: a set of acknowledgements cryptographically
%signed by a quorum of processes proving that they accepted that message.
%\needsrev{[This last sentence is not clear. What does it mean to ``cheat''? Quorum intersection already prevents cheating, no?]}
%]]

%[[PK ninor
To ensure the source order property, upon broadcasting any message, a benign process $p$
tags the message with a monotonically growing sequence number.
A benign process accepts (and acknowledges) a new message received from $p$ only once it has
accepted all messages from $p$ with lower sequence numbers and it received no
different message from $p$ with the same sequence number.
To broadcast a message, the sender sends it to all the processes,
waits until a quorum of them accepts it, and resends the message attaching the corresponding proof.
Once a benign process receives a message equipped with a proof from $p$, it resends
the message to all (to ensure the agreement property) and, once the previous
message broadcast by $p$ has been delivered,
delivers the message.
%]]

\section{Correctness of \namemp Algorithm}
\label{app:proof}

We focus on \namemp , our asset transfer algorithm for the message passing setting based on secure broadcast, which we described earlier in~\Cref{fig:banking-relaxed}.
%Consider any execution of the algorithm.
We first prove an auxiliary lemma.

% [[PK not needed any longer
\ignore{
In the following, let $x^p$ denote the value of a local variable $x$ at a correct process $p$.

\begin{lemma}
  \label{lem:prefix}
At any point in the execution,  for all correct processes $p$ and $r$
and for every process $q$, $hist[q]^p$ and $hist[q]^r$ are related by
containment. % (one is a prefix of the other). \needsrev{Should we use ``prefix'' or ``subset'' here?}
\end{lemma}
\begin{proof}
A correct process updates local variable $hist[q]$ each time a new
transfer from or to $q$
% that involves $q$ is applied.
% This can happen when $q$ is either the outgoing (line~\ref{line:append-outgoing}) or the incoming account in such a transfer (line~\ref{line:append-incoming}).
is delivered by secure broadcast and validated in
line~\ref{line:append-outgoing} or~\ref{line:append-incoming}.
By the source order property of secure broadcast (see~\Cref{app:secure-broadcast}), correct processes
$p$ and $r$ deliver messages from $q$ in the same order.
By the algorithm in~\Cref{fig:banking-relaxed}, a message from $q$ with a sequence number $i$ is
added to
$\textit{toValidate}$ set
only if the previous message added to $\textit{toValidate}$ had
sequence number $i-1$ (line~\ref{line:nextrec}).
%[[PK
% \needsrev{[Isn't this property ensured directly by source order from
% secure broadcast?]}.
% PK: a bad node can mess with sequence numbers
%Therefore, $\textit{toValidate}^p$ and $\textit{toValidate}^q$ are
%related by containment.
Similarly, such a message is successfully validated only
if the last validated message from $q$ had sequence number $i-1$
(line~\ref{line:check-hist1}).
Thus, $hist[q]^p$ and $hist[q]^r$ are constructed from the same sequence of
messages from process $q$ and are therefore related by containment.
\end{proof}
}
%]]

\begin{lemma}
  \label{lem:liveness}
In any infinite execution of algorithm {\namemp} (Figure~\ref{fig:banking-relaxed}),
every operation performed by a correct process eventually completes.
\end{lemma}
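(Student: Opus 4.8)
The plan is to dispatch the easy cases first and then concentrate on the single operation whose termination is non-trivial. A \read$(a)$ returns immediately at line~\ref{line:response-read}, and a \transfer$(a,b,x)$ that fails the initial balance test returns \false\ immediately at line~\ref{line:response-false}; both complete in a bounded number of local steps. The only operation that can block is a \transfer\ that passes the test and broadcasts its message at line~\ref{line:check-sbcast}: such an operation returns only when the process $p$ itself validates the corresponding transfer and reaches line~\ref{line:response-true}. So the whole argument reduces to showing that $p$'s own outgoing transfer, once broadcast, is eventually delivered back to $p$ and then passes the \emph{Valid} predicate at $p$.

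For delivery I would invoke the guarantees of the underlying secure broadcast (Appendix~\ref{app:secure-broadcast}). Since $p$ is correct, Validity ensures $p$ eventually delivers every message it broadcasts, and Source Order ensures $p$ delivers its own messages in the order it sent them. Because $p$ is sequential and tags its $s$-th transfer with sequence number $s$, these deliveries occur in increasing sequence-number order; hence when the message carrying sequence number $s$ is delivered, $\textit{rec}[p]=s-1$, the guard at line~\ref{line:nextrec} holds, and the transfer is placed in $\textit{toValidate}$.

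The crux is to show that the \emph{Valid} predicate (lines~\ref{line:valid}--\ref{line:check-hist3}) becomes, and stays, true for this transfer $t=(p,b,x,s)$ with attached history $h$. The first clause ($q=c$) is immediate since $p$ issued it. For the sequence clause (line~\ref{line:check-hist1}) I would argue by induction on $s$ together with sequentiality: $p$ broadcasts its $s$-th transfer only after its $(s-1)$-st returned \true, i.e. was validated, so $\textit{seq}[p]=s-1$ at broadcast; and since $p$ alone owns its account, $\textit{seq}[p]$ is advanced only by validating $p$'s own transfers, in order, so it remains $s-1$ until $t$ itself is validated. The two remaining clauses (lines~\ref{line:check-hist2} and~\ref{line:check-hist3}) are handled by the key observation that $\textit{deps}\subseteq hist[p]$ at all times: every incoming transfer added to $\textit{deps}$ at line~\ref{line:deps} is simultaneously added to $hist[p]$ at line~\ref{line:append-incoming}. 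Consequently the balance test passed at broadcast against $hist[p]\cup\textit{deps}$ is in fact a test against $hist[p]$, and $h=\textit{deps}\subseteq hist[p]$ already at broadcast. Since $hist[p]$ only grows, the containment clause persists; and since no new \emph{outgoing} transfer from $p$ can be validated before $t$ (by the sequence argument and single ownership), the outgoing part of $hist[p]$ is frozen while its incoming part can only grow, so $\textit{balance}(p,hist[p])$ never drops below $x$. Under the standard fairness assumption that a continuously enabled \textbf{upon} guard eventually fires, the block at line~\ref{line:validate} triggers, and since $c=p$ the operation returns \true\ at line~\ref{line:response-true}.

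The main obstacle, and the part I would write most carefully, is this transfer of the balance and dependency conditions from broadcast time to validation time: the predicate is evaluated against $hist[p]$ alone, whereas the operation was admitted against $hist[p]\cup\textit{deps}$, and one must rule out that concurrently validated transfers lower the balance below $x$ in the interim. The three facts that make it work---$\textit{deps}\subseteq hist[p]$, monotonic growth of $hist[p]$, and the freezing of $p$'s outgoing transfers due to in-order validation under single ownership---are precisely what the proof must nail down.
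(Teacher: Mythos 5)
Your proof is correct and takes essentially the same approach as the paper's: dispatch reads and failed transfers immediately, invoke the validity of secure broadcast so that $p$ eventually delivers its own message, and then verify each clause of the \emph{Valid} predicate using $p$'s correctness, sequentiality, and single ownership of its account. Your write-up is in fact more detailed than the paper's own proof, which asserts rather than derives the facts you nail down explicitly (that the $\textit{rec}[p]$ guard passes via source order, that $\textit{deps}\subseteq hist[p]$ so the broadcast-time balance check carries over to validation time, and that $p$'s outgoing history stays frozen until $T$ is validated).
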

\begin{proof}
A transfer operation that fails or a read operation invoked by a correct process
returns immediately (lines~\ref{line:response-false}
and~\ref{line:response-read}, respectively).

Consider a transfer operation $T$ invoked by a correct process $p$
that \emph{succeeds} (i.e., passes
the check in line~\ref{line:check-balance}), so $p$ broadcasts a message with the transfer details using secure broadcast (line~\ref{line:check-sbcast}).
%Let $T'$ be the previous successful transfer performed by process $p$.
%
By the validity property of secure broadcast,
$p$ eventually delivers the message (via the secure broadcast callback, line~\ref{line:deliver}) and adds it to the
$\textit{toValidate}$ set.
%
%[[PK
%\needsrev{
By the algorithm, this message includes a set $\textit{deps}$ of
operations (called $h$, line~\ref{line:sb-deliver}) that involve $p$'s account.
This set includes transfers that process $p$ delivered and validated after issuing
the prior successful outgoing transfer (or since the initial system
time if there is no such transfer)
but before issuing $T$ (lines~\ref{line:check-sbcast} and~\ref{line:dep-null}).
%}]]

As process $p$ is correct, it operates on its own account, respects the
sequence numbers, and issues a transfer only if it has enough balance
on the account. Thus, when it is delivered by $p$, $T$ must satisfy the first three conditions of the
$\textit{Valid}$ predicate (lines~\ref{line:validation}--\ref{line:check-hist2}).
Moreover, by the algorithm, all dependencies (labeled $h$ in function $\textit{Valid}$) included in $T$  are in the history
$hist[p]$ and, thus the fourth validation condition (line~\ref{line:check-hist3})
also holds.

Thus, $p$ eventually validates $T$ and completes the operation by
returning $\true$ in line~\ref{line:response-true}.
\end{proof}

\begin{theorem}
Algorithm {\namemp} (Figure~\ref{fig:banking-relaxed}) implements an
\textsf{asset-transfer} object type.
\end{theorem}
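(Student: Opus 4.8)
The plan is to prove both the liveness and the safety half of the refined Byzantine specification of \cref{sec:def-byz}. Liveness is already discharged by Lemma~\ref{lem:liveness}, so the work is to exhibit, for every execution, a single \emph{legal} sequential history $S$ over the operations of benign processes such that successful \transfer\ operations are linearized (their real-time order $\prec_H$ is preserved) while \read\ and failed \transfer\ operations are only required to be sequentially consistent (possibly outdated). Legality of $S$ is exactly the no-double-spending guarantee. The entire argument will lean on the four properties of the secure broadcast primitive (Integrity, Agreement, Validity, and Source Order, \cref{app:secure-broadcast}) and will deliberately mirror the shared-memory argument of Theorem~\ref{th:waitfree}.

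First I would re-establish the containment/consistency lemma (the statement currently commented out): for any process $q$ and any two benign processes $p,r$, the sets $hist[q]$ held by $p$ and by $r$ are related by inclusion, and in fact every benign process validates the transfers \emph{involving} $q$ in the same order. This follows from Source Order of secure broadcast (messages from a fixed sender are delivered in one common order), together with the sequence-number gating applied both at delivery (line~\ref{line:nextrec}) and at validation (line~\ref{line:check-hist1}), and the dependency check (line~\ref{line:check-hist3}), which forces the reported history $h$ to be present in $hist[q]$ before a transfer is accepted. Crucially, single ownership---every \transfer$(a,b,x)$ is issued with $a=p$, so that only one sender ever debits $a$---implies that all \emph{outgoing} transfers for $a$ come from one source and are therefore totally and consistently ordered across benign processes, while the frozen set $\textit{deps}$ pins down the relevant incoming transfers.

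Next I would build $S$. Define a relation $\sqsubseteq$ on successful transfers as the transitive closure of (i) per-sender source order and (ii) the dependency relation ($T \sqsubseteq T'$ whenever $T$ lies in the $\textit{deps}$ attached to $T'$, cf.\ lines~\ref{line:deps} and~\ref{line:check-sbcast}), and consider $\sqsubseteq \cup \prec_H$. I would argue this union is acyclic---causal precedence means the influencing transfer was validated before the influenced one was issued, hence it began before the latter completed, which cannot contradict $\prec_H$---so it admits a linear extension; order the successful transfers by any such extension. Then insert each benign process's \read\ and failed \transfer\ operations in program order, each placed immediately after the last successful transfer that process had validated, using the containment lemma to guarantee the inserted operation observes a state coinciding with a prefix of $S$. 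One then checks $\bar H|p = S|p$ for every benign $p$, with $\prec_H \subseteq \prec_S$ restricted to successful transfers.

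Finally I would prove legality by induction on prefixes of $S$, exactly as in Theorem~\ref{th:waitfree}. The heart of the argument is the successful-transfer case: when $T=\transfer(a,b,x)$ is applied, the prefix of $S$ preceding $T$ contains \emph{all} outgoing transfers for $a$ that precede $T$ (by source order and sequence numbers the owner had accounted for precisely these when it passed the check in lines~\ref{line:check-balance} and~\ref{line:check-hist2}) and \emph{at least} the incoming transfers in $\textit{deps}(T)$ (they are dependencies, hence ordered before $T$), plus possibly further incoming transfers. Since incoming transfers only raise the balance, $\textit{balance}(a)$ over this prefix is at least the value the owner computed, which was $\geq x$; thus the balance stays non-negative and no double-spend occurs. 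Reads and failed transfers are read-only and, by their placement, return values consistent with their prefix. Combined with Lemma~\ref{lem:liveness}, this shows \namemp\ implements the \textsf{asset-transfer} type. The main obstacle is precisely this global step: producing one total order that is simultaneously legal, respects the real-time order of successful transfers, and tolerates benign processes validating independent incoming transfers in different relative orders. The lever that makes it work is monotonicity of incoming transfers together with single ownership and Source Order, which totally order the only operations that can \emph{decrease} an account's balance---so that a locally checked balance is globally sound.
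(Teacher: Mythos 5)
Your treatment of successful transfers is sound and essentially the paper's: you build the causal order from source order plus the attached dependency sets, argue acyclicity from the sequence-number gating (lines~\ref{line:nextrec} and~\ref{line:check-hist1}) and the dependency check (line~\ref{line:check-hist3}), take a linear extension, and discharge legality by noting that any incoming transfers not accounted for by the issuer can only \emph{increase} the balance. Your explicit argument that the union with $\prec_H$ is acyclic is in fact more careful than the paper, which only observes that $S$ respects each process's own order of successful transfers. Liveness via Lemma~\ref{lem:liveness} is also as in the paper.

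The gap is in your global step for \read{} and failed \transfer{} operations. You claim a \emph{single} legal sequential history $S$ containing the operations of all benign processes with $\bar H|p = S|p$ for every benign $p$; no such history exists in general, and this is not what the algorithm guarantees. Concretely, let $T_a$ and $T_b$ be causally unrelated transfers from distinct senders, incoming to accounts $a$ and $b$. Secure broadcast only constrains per-sender order, so benign $p$ may validate $T_a$ then $T_b$, while benign $r$ validates $T_b$ then $T_a$. If $p$ executes $\read(a)$ (reflecting $T_a$) and then $\read(b)$ (not reflecting $T_b$), legality and program order force $T_a < \read_p(a) < \read_p(b) < T_b$ in $S$; symmetrically, $r$'s reads force $T_b < \read_r(b) < \read_r(a) < T_a$ --- a cycle. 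Your appeal to the containment lemma does not repair this: that lemma is \emph{per account} ($hist[q]^p$ and $hist[q]^r$ are nested for a fixed $q$), whereas the prefix of a global $S$ at your insertion point mixes transfers on all accounts, including ones the reading process has not validated. This is precisely why the paper does not attempt a single history: it builds one global legal $S$ for successful transfers only, and then, for each correct process $p$, a separate legal sequential history $S_p$ (a linear extension of $\preceq$ restricted to the messages $p$ itself validated) consistent with \emph{all} of $p$'s operations, reads and failed transfers included. To fix your proof, you must weaken your target to this per-process formulation rather than textbook sequential consistency over one shared order.
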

\begin{proof}
Fix an execution $E$ of the algorithm, let $H$ be the
  corresponding  history.
%Let $\bar H$ be the completion of $H$ in which every incomplete
%  transfer operation that is delivered and validated at a benign process is
%  completed with \textit{true}.
%Recall that $\bar H^t$ is defined as the sub-history of $\bar H$
%consisting of all successful transfers performed by correct
%processes.
%Let $x^p$ denote the value of a local variable $x$ at a correct process $p$.

Let $\V$ denote the set of all messages %$m=[(q,d,y,s),h]$
  that were delivered (line~\ref{line:deliver}) and
validated (line~\ref{line:validation}) at correct processes in $E$.
Every message $m=[(q,d,y,s),h]\in\V$ is put
in $hist[q]$ (line~\ref{line:append-outgoing}) and $hist[d]$ (line~\ref{line:append-incoming}).
%
%By Lemma~\ref{lem:prefix}, correct processes agree on the order in all
%such histories $hist[q]$.
We define an order $\preceq \subseteq \V\times\V$ as
follows. For $m=[(q,d,y,s),h]\in \V$ and $m'=[(r,d',y',s'),h']\in \V$,
we have $m\preceq m'$ if and only if one of the following conditions holds:

\begin{itemize}
\item $q=r$ and $s<s'$,

\item $(r,d',y',s')\in h$, or

\item there exists $m''\in \V$ such that $m\preceq m''$ and
  $m''\preceq m'$.
\end{itemize}

By the source order property of secure broadcast (see~\Cref{app:secure-broadcast}), correct processes
$p$ and $r$ deliver messages from any process $q$ in the same order.
By the algorithm in~\Cref{fig:banking-relaxed}, a message from $q$ with a sequence number $i$ is
added by a correct process to
$\textit{toValidate}$ set
only if the previous message from $q$ added to $\textit{toValidate}$ had
sequence number $i-1$ (line~\ref{line:nextrec}).
Furthermore, a message  $m=[(q,d,y,s),h]$ is validated at a correct
process only if all messages in $h$ have been previously validated (line~\ref{line:check-hist3}).
Therefore, $\preceq$ is acyclic and
thus can be extended to a total order.

Let $S$ be the sequential history constructed from any such total order on messages
in $\V$ in which every message $m=[(q,d,y,s),h]$ is replaced with the
invocation-response pair
$\textit{transfer}(q,d,y);\true$.

By construction, every operation $\textit{transfer}(q,d,y)$ in $S$ is preceded by a sequence of
transfers that ensure that the balance of $q$ does not drop below $y$
(line~\ref{line:check-hist2}).
In particular, $S$ includes all outgoing transfers from the account of
$q$ performed previously by $q$ itself.
Additionally $S$ may order some \emph{incoming} transfer to $q$
that did not appear at $hist[q]$ before the corresponding $(q,d,y,s)$
has been added to it.
But these ``unaccounted'' operations may only increase the balance
of $q$ (line~\ref{line:append-incoming}) and, thus, it is indeed legal
to return $\true$.

By construction, for each correct process $p$, $S$ respects the order of
successful transfers issued by $p$.
Thus, the subsequence of successful transfers in $H$ ``looks'' linearizable to
the correct processes:  $H$, restricted to successful transfers
witnessed by the correct processes, is consistent with a
a legal sequential history $S$.

Let $p$ be a correct process in $E$.
Now let $\V_p$ denote the set  of all messages %$m=[(q,d,y,s),h]$
  that were delivered (line~\ref{line:deliver}) and
validated (line~\ref{line:validation}) at $p$  in $E$.
Let $\preceq_p$ be the subset of $\preceq$ restricted to the elements in
$\V_p$.
Obviously, $\preceq_p$ is cycle-free and we can again extend it to a
total order.
Let $S_p$ be the sequential history build in the same way as $S$
above.
Similarly, we can see that $S_p$ is legal and, by construction,
consistent with the local history of \emph{all} operations of $p$
(including reads and failed transfers).

By Lemma~\ref{lem:liveness}, every operation invoked by a correct
process eventually completes.
Thus, $E$ indeed satisfies the properties of an \textsf{asset-transfer} object type.
\end{proof}
% \input{sections/appendices/gossip-analysis}
%!TEX root = ../../main.tex

\section{Analysis of \erg}
\label{appendix:erganalysis}

We now discuss the correctness of \erg .

\paragraph{No duplication}: A correct process maintains a $delivered$ variable that it checks and updates before delivering a message. This prevents any correct process from delivering more than one message.
\paragraph{Integrity}: Before broadcasting a message, the sender signs that message with its private key. Before delivering a message $m$, a correct process verifies $m$'s signature. Under the assumption that signatures cannot be forged, this prevents any correct process from delivering a message that was not previously broadcast by the sender.
\paragraph{Validity}: Upon broadcasting a message, the sender also immediately delivers it. Since this happens \emph{deterministically}, and thus \erg\ satisfies $0$-validity, independently from the parameter $G$.

\subsection{Totality}

\erg\ satisfies $\epsilon_t$-totality with $\epsilon_t$ upper-bounded by a function that decays exponentially with $G$, and polynomially increases with the fraction $f$ of Byzantine faults.

Indeed, the network of connections established among the correct processes is an undirected Erdős–Rényi graph, and totality is satisfied if such graph is connected. This allows us to bound the probability of totality not being satisfied, using a well-known result on the connectivity of Erdős–Rényi graphs.

Upon initialization, a correct process randomly selects a sample of other processes (it uses an oracle to achieve this, see Assumption \ref{ergpdeassumption:sampling}) with which it will exchange messages.

\begin{figure}
\begin{tikzpicture}
\begin{semilogyaxis}[
    title={Gossip validity ($N = 1024$)},
    xlabel={Sample size ($G$)},
    ylabel={Failure probability ($\epsilon$)},
    legend pos=south west,
    ymajorgrids=true,
    grid style=dashed,
    width=0.45\textwidth
]

\addplot[
    color=palette-0,
    mark=square,
    ]
    coordinates {
    (5, 0.0468971)
    (6, 0.00616465)
    (7, 0.00082308)
    (8, 0.000109936)
    (9, 1.46593e-05)
    (10, 1.95095e-06)
    (11, 2.59128e-07)
    (12, 3.43493e-08)
    (13, 4.54415e-09)
    (14, 5.99956e-10)
    (15, 7.90524e-11)
    (16, 1.03953e-11)
    (17, 1.36422e-12)
    (18, 1.78672e-13)
    };

\addplot[
    color=palette-1,
    mark=square,
    ]
    coordinates {
    (5, 0.0750802)
    (6, 0.0108075)
    (7, 0.00159513)
    (8, 0.000235931)
    (9, 3.48508e-05)
    (10, 5.13878e-06)
    (11, 7.56299e-07)
    (12, 1.11098e-07)
    (13, 1.6289e-08)
    (14, 2.38374e-09)
    (15, 3.48173e-10)
    (16, 5.07579e-11)
    (17, 7.38554e-12)
    (18, 1.07258e-12)
    (19, 1.55468e-13)
    };

\addplot[
    color=palette-2,
    mark=square,
    ]
    coordinates {
    (6, 0.018723)
    (7, 0.00304488)
    (8, 0.000497592)
    (9, 8.12597e-05)
    (10, 1.32489e-05)
    (11, 2.15635e-06)
    (12, 3.50334e-07)
    (13, 5.68154e-08)
    (14, 9.19744e-09)
    (15, 1.48623e-09)
    (16, 2.39728e-10)
    (17, 3.8598e-11)
    (18, 6.20332e-12)
    (19, 9.9516e-13)
    (20, 1.59357e-13)
    };

\addplot[
    color=palette-3,
    mark=square,
    ]
    coordinates {
    (6, 0.0324293)
    (7, 0.00579829)
    (8, 0.00104639)
    (9, 0.000188894)
    (10, 3.4054e-05)
    (11, 6.12932e-06)
    (12, 1.10135e-06)
    (13, 1.97562e-07)
    (14, 3.53787e-08)
    (15, 6.32472e-09)
    (16, 1.12875e-09)
    (17, 2.01101e-10)
    (18, 3.57672e-11)
    (19, 6.35053e-12)
    (20, 1.12561e-12)
    (21, 1.99166e-13)
    };

\addplot[
    color=palette-4,
    mark=square,
    ]
    coordinates {
    (6, 0.0562503)
    (7, 0.0110174)
    (8, 0.00219357)
    (9, 0.000437621)
    (10, 8.72306e-05)
    (11, 1.73625e-05)
    (12, 3.45044e-06)
    (13, 6.84614e-07)
    (14, 1.3562e-07)
    (15, 2.68227e-08)
    (16, 5.29646e-09)
    (17, 1.04416e-09)
    (18, 2.05519e-10)
    (19, 4.03861e-11)
    (20, 7.92335e-12)
    (21, 1.55196e-12)
    (22, 3.03492e-13)
    };

    \legend{$f = 0.00$, $f = 0.05$, $f = 0.10$, $f = 0.15$, $f = 0.20$}

\end{semilogyaxis}
\end{tikzpicture}
\begin{tikzpicture}
\begin{semilogyaxis}[
    title={Gossip validity ($N = 8192$)},
    xlabel={Sample size ($G$)},
    ylabel={Failure probability ($\epsilon$)},
    legend pos=south west,
    ymajorgrids=true,
    grid style=dashed,
    width=0.45\textwidth
]

\addplot[
    color=palette-0,
    mark=square,
    ]
    coordinates {
    (6, 0.0514704)
    (7, 0.00680596)
    (8, 0.00091692)
    (9, 0.000123815)
    (10, 1.67209e-05)
    (11, 2.25766e-06)
    (12, 3.04757e-07)
    (13, 4.11286e-08)
    (14, 5.54916e-09)
    (15, 7.48522e-10)
    (16, 1.00943e-10)
    (17, 1.36094e-11)
    (18, 1.83441e-12)
    (19, 2.472e-13)
    };

\addplot[
    color=palette-1,
    mark=square,
    ]
    coordinates {
    (6, 0.0908409)
    (7, 0.0130731)
    (8, 0.00194178)
    (9, 0.000289716)
    (10, 4.32464e-05)
    (11, 6.45465e-06)
    (12, 9.63164e-07)
    (13, 1.43691e-07)
    (14, 2.14316e-08)
    (15, 3.19581e-09)
    (16, 4.76438e-10)
    (17, 7.10118e-11)
    (18, 1.05816e-11)
    (19, 1.57643e-12)
    (20, 2.34798e-13)
    };

\addplot[
    color=palette-2,
    mark=square,
    ]
    coordinates {
    (7, 0.0251123)
    (8, 0.00410308)
    (9, 0.000676103)
    (10, 0.000111543)
    (11, 1.84027e-05)
    (12, 3.03557e-06)
    (13, 5.00617e-07)
    (14, 8.25422e-08)
    (15, 1.36066e-08)
    (16, 2.24249e-09)
    (17, 3.695e-10)
    (18, 6.08698e-11)
    (19, 1.00252e-11)
    (20, 1.65078e-12)
    (21, 2.71762e-13)
    };

\addplot[
    color=palette-3,
    mark=square,
    ]
    coordinates {
    (7, 0.0482708)
    (8, 0.00863774)
    (9, 0.0015701)
    (10, 0.000286165)
    (11, 5.21728e-05)
    (12, 9.51094e-06)
    (13, 1.73349e-06)
    (14, 3.15885e-07)
    (15, 5.75501e-08)
    (16, 1.04827e-08)
    (17, 1.90902e-09)
    (18, 3.47581e-10)
    (19, 6.32721e-11)
    (20, 1.15154e-11)
    (21, 2.09533e-12)
    (22, 3.81186e-13)
    };

\addplot[
    color=palette-4,
    mark=square,
    ]
    coordinates {
    (7, 0.0935578)
    (8, 0.0181991)
    (9, 0.0036432)
    (10, 0.000733401)
    (11, 0.000147782)
    (12, 2.97794e-05)
    (13, 5.99996e-06)
    (14, 1.20865e-06)
    (15, 2.43425e-07)
    (16, 4.9017e-08)
    (17, 9.86832e-09)
    (18, 1.98634e-09)
    (19, 3.99743e-10)
    (20, 8.04306e-11)
    (21, 1.618e-11)
    (22, 3.25423e-12)
    (23, 6.54384e-13)
    };

    \legend{$f = 0.00$, $f = 0.05$, $f = 0.10$, $f = 0.15$, $f = 0.20$}

\end{semilogyaxis}
\end{tikzpicture}

\begin{tikzpicture}
\begin{semilogyaxis}[
    title={Gossip validity ($N = 65536$)},
    xlabel={Sample size ($G$)},
    ylabel={Failure probability ($\epsilon$)},
    legend pos=south west,
    ymajorgrids=true,
    grid style=dashed,
    width=0.45\textwidth
]

\addplot[
    color=palette-0,
    mark=square,
    ]
    coordinates {
    (7, 0.0559772)
    (8, 0.00739694)
    (9, 0.00099765)
    (10, 0.000134924)
    (11, 1.82536e-05)
    (12, 2.46955e-06)
    (13, 3.34099e-07)
    (14, 4.51982e-08)
    (15, 6.11439e-09)
    (16, 8.27126e-10)
    (17, 1.11886e-10)
    (18, 1.51346e-11)
    (19, 2.04715e-12)
    (20, 2.76895e-13)
    };

\addplot[
    color=palette-1,
    mark=square,
    ]
    coordinates {
    (8, 0.0157051)
    (9, 0.00233295)
    (10, 0.000348508)
    (11, 5.21041e-05)
    (12, 7.79065e-06)
    (13, 1.16485e-06)
    (14, 1.74164e-07)
    (15, 2.60394e-08)
    (16, 3.89307e-09)
    (17, 5.82024e-10)
    (18, 8.70116e-11)
    (19, 1.30077e-11)
    (20, 1.94451e-12)
    (21, 2.90676e-13)
    };

\addplot[
    color=palette-2,
    mark=square,
    ]
    coordinates {
    (8, 0.0334044)
    (9, 0.00544519)
    (10, 0.000897848)
    (11, 0.000148321)
    (12, 2.45091e-05)
    (13, 4.05007e-06)
    (14, 6.69252e-07)
    (15, 1.10587e-07)
    (16, 1.82729e-08)
    (17, 3.01926e-09)
    (18, 4.98861e-10)
    (19, 8.24228e-11)
    (20, 1.36177e-11)
    (21, 2.24982e-12)
    (22, 3.71688e-13)
    };

\addplot[
    color=palette-3,
    mark=square,
    ]
    coordinates {
    (8, 0.0715139)
    (9, 0.012696)
    (10, 0.00230695)
    (11, 0.00042095)
    (12, 7.68682e-05)
    (13, 1.40382e-05)
    (14, 2.56377e-06)
    (15, 4.68204e-07)
    (16, 8.5503e-08)
    (17, 1.56141e-08)
    (18, 2.85128e-09)
    (19, 5.20657e-10)
    (20, 9.50719e-11)
    (21, 1.73597e-11)
    (22, 3.16972e-12)
    (23, 5.78745e-13)
    };

\addplot[
    color=palette-4,
    mark=square,
    ]
    coordinates {
    (9, 0.0296394)
    (10, 0.00591346)
    (11, 0.00119086)
    (12, 0.000240262)
    (13, 4.84912e-05)
    (14, 9.78732e-06)
    (15, 1.97543e-06)
    (16, 3.98703e-07)
    (17, 8.04686e-08)
    (18, 1.62403e-08)
    (19, 3.27756e-09)
    (20, 6.61449e-10)
    (21, 1.33485e-10)
    (22, 2.69374e-11)
    (23, 5.43588e-12)
    (24, 1.09692e-12)
    (25, 2.21343e-13)
    };

    \legend{$f = 0.00$, $f = 0.05$, $f = 0.10$, $f = 0.15$, $f = 0.20$}

\end{semilogyaxis}
\end{tikzpicture}
\begin{tikzpicture}
\begin{semilogyaxis}[
    title={Gossip validity ($N = 524288$)},
    xlabel={Sample size ($G$)},
    ylabel={Failure probability ($\epsilon$)},
    legend pos=south west,
    ymajorgrids=true,
    grid style=dashed,
    width=0.45\textwidth
]

\addplot[
    color=palette-0,
    mark=square,
    ]
    coordinates {
    (8, 0.0607705)
    (9, 0.00801589)
    (10, 0.00108106)
    (11, 0.000146232)
    (12, 1.97882e-05)
    (13, 2.67791e-06)
    (14, 3.62398e-07)
    (15, 4.90427e-08)
    (16, 6.63683e-09)
    (17, 8.98145e-10)
    (18, 1.21543e-10)
    (19, 1.6448e-11)
    (20, 2.22583e-12)
    (21, 3.01211e-13)
    };

\addplot[
    color=palette-1,
    mark=square,
    ]
    coordinates {
    (9, 0.0188311)
    (10, 0.00279416)
    (11, 0.00041741)
    (12, 6.24181e-05)
    (13, 9.33518e-06)
    (14, 1.39618e-06)
    (15, 2.08816e-07)
    (16, 3.12307e-08)
    (17, 4.67088e-09)
    (18, 6.98577e-10)
    (19, 1.04479e-10)
    (20, 1.56258e-11)
    (21, 2.33696e-12)
    (22, 3.49511e-13)
    };

\addplot[
    color=palette-2,
    mark=square,
    ]
    coordinates {
    (9, 0.0444297)
    (10, 0.00721138)
    (11, 0.00118842)
    (12, 0.000196341)
    (13, 3.2451e-05)
    (14, 5.36382e-06)
    (15, 8.86591e-07)
    (16, 1.46545e-07)
    (17, 2.42225e-08)
    (18, 4.00373e-09)
    (19, 6.61774e-10)
    (20, 1.09384e-10)
    (21, 1.80798e-11)
    (22, 2.98837e-12)
    (23, 4.93938e-13)
    };

\addplot[
    color=palette-3,
    mark=square,
    ]
    coordinates {
    (10, 0.0186189)
    (11, 0.00337568)
    (12, 0.000615813)
    (13, 0.000112467)
    (14, 2.05441e-05)
    (15, 3.75289e-06)
    (16, 6.8556e-07)
    (17, 1.25235e-07)
    (18, 2.28772e-08)
    (19, 4.17906e-09)
    (20, 7.63403e-10)
    (21, 1.39453e-10)
    (22, 2.54741e-11)
    (23, 4.6534e-12)
    (24, 8.5004e-13)
    };

\addplot[
    color=palette-4,
    mark=square,
    ]
    coordinates {
    (10, 0.0483275)
    (11, 0.00957399)
    (12, 0.00192555)
    (13, 0.000388451)
    (14, 7.8412e-05)
    (15, 1.583e-05)
    (16, 3.19587e-06)
    (17, 6.45205e-07)
    (18, 1.30258e-07)
    (19, 2.62974e-08)
    (20, 5.30906e-09)
    (21, 1.07182e-09)
    (22, 2.16383e-10)
    (23, 4.36843e-11)
    (24, 8.81912e-12)
    (25, 1.78043e-12)
    (26, 3.59436e-13)
    };

    \legend{$f = 0.00$, $f = 0.05$, $f = 0.10$, $f = 0.15$, $f = 0.20$}

\end{semilogyaxis}
\end{tikzpicture}
\caption{Upper bound for the $\epsilon$-security of \erg, as a function of the gossip sample size ($G$), the fraction of Byzantine failures ($f$), and the system size ($N$).
% The  values follow from \cref{equation:ergtotalitysecurity}.
}
\label{figure:ergsecurity}
\end{figure}

We start by noting that every link is eventually reciprocated by correct processes, i.e., if a correct process $\pi$ is in the sample of $\rho$, then $\rho$ will eventually be in the sample of $\pi$ (this is due to the fact that messages are always eventually delivered, see Assumption \ref{ergpdeassumption:asynchrony}).

We consider the sub-graph of connections only between \emph{correct} processes. This network is eventually undirected. We show that, if such graph is connected, then \erg\ satisfies totality. This is due to the fact that every message will eventually propagate through all the gossip links, reaching every correct process (again, due to Assumption \ref{ergpdeassumption:asynchrony}).

We show that any two correct processes have an independent probability of being connected. This is due to the fact that, upon initialization, the number of elements in a correct process' gossip sample is sampled from a Poisson distribution. Poisson distributions quickly limit to binomial distributions for large systems, and we show that selecting a binomially distributed number of distinct objects from a set is equivalent to selecting each object with an independent probability. This proves that the sub-graph of connections between correct processes is an Erdős–Rényi graph.

Erdős–Rényi graphs are well known in literature \cite{phasetransitions} to display a connectivity phase transition: when the expected number of connections each node has exceeds the logarithm of the number of nodes, the probability of the graph being connected steeply increases from $0$ to $1$ (in the limit of infinitely large systems, this increase becomes a step function). We use that result to compute the probability of the sub-graph of correct processes being connected and, consequently, of \erg\ satisfying totality.

\Cref{figure:ergsecurity} shows the $\epsilon$-security of \erg, as a function of the gossip sample size ($G$), the fraction of Byzantine failures ($f$) and the size of the system ($N$).
\cmt{YAP}{why only to f=0.2?\\
to make failure probability more concrete, give an example what 10-30 means, e.g., expected time to failure assuming x transactions per second}
\cmt{YAP}{asymptotic size of sample as a function of n}

%!TEX root = ../../main.tex

\section{Analysis of \pde}
\label{appendix:pdeanalysis}

We now discuss the correctness of \pde .

\paragraph{No duplication}
A correct process maintains a \emph{delivered} variable that it checks and updates before delivering a message. This prevents any correct process from delivering more than one message.

\paragraph{Integrity}: Before broadcasting a message, the sender signs that message with its private key. Before delivering a message $m$, a correct process verifies $m$'s signature. Under the assumption that signatures cannot be forged, this prevents any correct process from delivering a message that was not previously broadcast by the sender.

In order to study validity, totality and consistency, we first establish some auxiliary results.

\subsection{Auxiliary results}
\label{subsection:pdeauxiliary}

For a correct process, the execution of \pde\ reduces to three operations: publishing an {\tt Echo} message, publishing a {\tt Ready} message, and delivering a message. Each operation is triggered by one or more conditions:
\begin{itemize}
    \item A correct process publishes an {\tt Echo} message upon \broadcastinstance.Deliver of a {\tt Send} message.
    \item A correct process publishes a {\tt Ready} message upon collecting either enough {\tt Echo} messages from its echo sample, or enough {\tt Ready} messages from its ready sample.
    \item A correct process delivers a message upon collecting enough {\tt Ready} messages from its delivery sample.
\end{itemize}

We study the probability of each condition being fulfilled in steps.

\begin{definition}[Ready, E-ready, R-ready]
Let $\pi$ be a correct process, let $m$ be a message. Then:
\begin{itemize}
    \item $\pi$ is \textbf{Ready} for $m$ if it eventually publishes a {\tt Ready} message for $m$.
    \item $\pi$ is \textbf{E-ready} for $m$ if it is ready for $m$ as a result of having collected enough {\tt Echo} messages for $m$.
    \item $\pi$ is \textbf{R-ready} for $m$ if it is ready for $m$ as a result of having collected enough {\tt Ready} messages for $m$.
\end{itemize}
\end{definition}

Let $\pi$ be a correct process, let $m$ be a message.

\paragraph{E-ready probability} We compute lower and upper bounds for the probability of $\pi$ being E-ready for $m$, given the number of correct processes that echo (i.e., publish an {\tt Echo} message for) $m$.

Echo samples are uniformly picked with replacement from the set of processes (correct processes use an oracle to achieve this, see Assumption \ref{ergpdeassumption:sampling}). Therefore, each element of $\pi$'s echo sample has an independent probability of being Byzantine and, if correct, of having echoed $m$.

Let $\rho$ be an element of $\pi$'s echo sample. We consider two scenarios. In the first scenario, no Byzantine process ever sends an {\tt Echo} message for $m$. In this scenario, the probability of $\pi$ receiving an {\tt Echo} message for $m$ from $\rho$ reduces to the probability of $\rho$ being correct and having echoed $m$ (this is due to the fact that messages are always eventually delivered, see Assumption \ref{ergpdeassumption:asynchrony}).

In the second scenario, all Byzantine processes send an {\tt Echo} message for $m$ to all the correct subscribed processes. In this scenario, the probability of $\pi$ receiving an {\tt Eccho} message for $m$ from $\rho$ reduces to the probability of $\rho$ being Byzantine, or $\rho$ being correct and having echoed $m$.

The probability of $\pi$ being E-ready for $m$ is minimized in the first scenario, and maximized in the second. Both probabilities can be computed by noting that the number of {\tt Echo} messages that $\pi$ receives is binomially distributed.

\paragraph{Ready feedback} We study the feedback mechanism produced by correct processes being R-ready for $m$, i.e., ready as a result of having received enough {\tt Ready} messages for $m$.

Ready samples are uniformly picked with replacement from the set of processes. We define a random \textbf{ready multigraph} $g$ allowing multi-edges and loops whose nodes represent the set of correct processes. The predecessors\footnote{Node $\mu$ is a predecessor of node $\nu$ in the multigraph $g$ if $\rp{\mu \rightarrow \nu} \in g$. Since multigraphs allow for multiple edges, the set of predecessors of a node is a multiset.} of a node $\nu$ represent the correct processes in $\nu$'s ready sample.

We introduce \pebbling, a game played on the nodes of $g$ where readiness for $m$ spreads like a disease: an \textbf{infected} node represents a correct process that is ready for $m$, and whenever enough predecessors of a node $\nu$ are infected, $\nu$ becomes infected (i.e., R-ready) as well.

\pebbling\ is played in rounds. At the beginning of each round, a \textbf{player} infects an arbitrary set of nodes (this models a set of processes being E-ready for $m$). Throughout the rest of the round, the disease automatically propagates to all the nodes that have enough infected predecessors.

Given the number of rounds and the number of nodes infected per round, we use Markov chains to compute the probability distribution underlying the number of nodes that are infected at the end of \pebbling.

We use \pebbling\ in three settings:
\begin{itemize}
    \item When evaluating \textbf{validity}, a correct sender takes the role of the player in a single-round game of \pebbling.

    When broadcasting, the correct sender uses \broadcast\ to distribute $m$. Unless the totality of \broadcast\ is compromised, every correct process eventually publishes an {\tt Echo} message for $m$. Given the fraction of Byzantine processes, we compute the probability of any correct process being E-ready for $m$.

    The number of correct processes that are E-ready for $m$ is used as input to (the single round of) \pebbling; its outcome represents the number of correct processes that are ready for $m$. This allows us to compute the probability of a correct process (and, in particular, the sender) eventually delivering $m$.

    \item When evaluating \textbf{totality}, a Byzantine sender takes the role of the player in a multi-round game of \pebbling. The game starts with no infected nodes. At the beginning of each round, the player infects one healthy node; throughout the round, the infection propagates until either all nodes are infected, or no uninfected node has enough infected predecessors.

    In order to compromise totality, a Byzantine adversary must cause at least one, but not all correct processes to deliver a message.

    We show that a Byzantine adversary that can cause processes to become E-ready for arbitrary messages can easily perform an attack to compromise totality. We therefore restrict ourselves to the case where the Byzantine adversary can arbitrarily cause any correct process to be E-ready for only one message $m$, and assume that totality is compromised if the adversary can cause two distinct processes to be E-ready for two distinct messages. By doing so, we effectively compute an upper bound for the probability of totality being compromised.

    Under Assumption \ref{ergpdeassumption:anonymity}, the adversary has no knowledge of any correct process' ready sample. As a result of this, we later show (see \cref{subsection:pdetotality}) that every action implemented by the adversary can be modeled by a multi-round game of \pebbling.

    Intuitively, lacking any information to meaningfully distinguish correct processes with respect to the topology of the \emph{ready multigraph}, the minimal action the adversary can perform on the system is causing a single correct process to become E-ready for $m$. As a result, zero or more additional correct processes become R-ready. The adversary can then carry on, causing more correct processes to become E-ready for $m$, or stop.

    In a game of \pebbling, therefore, the adversary's strategy reduces to playing in rounds, and stopping as soon as any process delivers $m$. At the end of each round, we compute the probability of at least one, but not all correct processes delivering $m$. Totality is compromised is this happens in any round.

    \item When evaluating \textbf{consistency}, a Byzantine sender takes the role of the player in a single-round game of \pebbling.

    Let $m$ and $m'$ be two conflicting messages. As we do when evaluating totality, we compute an upper bound for consistency by assuming that if any two correct processes become E-ready for $m$ and $m'$, then consistency is compromised.

    Under the assumption that correct processes can be E-ready for at most one message, if $m$ and $m'$ are delivered by at least one correct process, then either $m$ or $m'$ is delivered without any correct process being E-ready for it.

    In order to compute the probability of $m$ being delivered even if no correct process is E-ready for $m$, we consider a scenario where all Byzantine processes send a {\tt Ready} message for $m$ to their correct subscribers. Noting how, in this scenario, the behavior of a Byzantine process is identical to that of a correct process that is E-ready for $m$, we can compute the probability of $m$ being delivered by playing a game of \pebbling\ where Byzantine processes are included as initially infected nodes.
\end{itemize}

% \needsrev{A complete analysis of \pebbling\ can be found in Appendix \ref{appendix:pebbling}.}

\paragraph{Delivery probability} We compute lower and upper bounds for the probability of $\pi$ delivering $m$, given the number of correct processes that are ready for $m$.

This can be achieved using the same technique that we employed when computing lower and upper bounds for the probability of a process being E-ready for $m$, given the number of correct process that echoed $m$. For the sake of brevity, we don't repeat that analysis here.

\paragraph{Early consistency} We call \textbf{early consistency} the condition where no two correct processes are E-ready for two different messages, $m_1$ and $m_2$.

As we see in \cref{subsection:pdetotality,subsection:pdeconsistency}, we compute upper bounds for the probability of compromising totality and consistency by assuming that if early consistency is compromised, then both totality and consistency are compromised.

Under Assumption \ref{ergpdeassumption:anonymity}, the adversary has no knowledge of the echo sample of any correct process. Therefore, the adversary has no way of meaningfully distinguishing two correct processes, based on the effect that their {\tt Echo} messages will have on the system.

% We consider a scenario where an adversary releases $k$ messages, $m_1, \ldots, m_k$, and can, for any $m_i$:
% \begin{itemize}
%     \item Cause any correct process to echo any $m_i$.
%     \item Determine if any correct process is E-ready for $m_i$.
% \end{itemize}

% We conjecture the following:

% \begin{conjecture}
% \label{conjecture:earlyconsistency}
% The probability of compromising early consistency is maximized by an adversary that only releases two messages.
% \end{conjecture}

% If \cref{conjecture:earlyconsistency} holds true, then the adversary that maximizes the probability of compromising early consistency releases two messages, $m_1$ and $m_2$. Early consistency is compromised if at least one correct process is E-ready for $m_1$, and one correct process is E-ready for $m_2$.

We consider a scenario where an adversary releases two different messages $m_1$ and $m_2$, and can:
\begin{itemize}
    \item Cause any correct process to echo any of $m_1$ and $m_2$.
    \item Determine if any correct process is E-ready for $m_1$ or $m_2$.
\end{itemize}
Early consistency is compromised if at least one correct process is E-ready for $m_1$, and one correct process is E-ready for $m_2$.

We start by noting that, if the echo threshold is larger than half of the sample size, then the order in which the adversary causes each process to echo either $m_1$ or $m_2$ does not affect the probability of compromising early consistency.

Therefore, any adversary that causes $n_1$ correct processes to echo $m_1$, and $n_2$ correct processes to echo $m_2$, has the same probability of compromising early consistency as one that \emph{first} causes $n_1$ correct processes to echo $m_1$, \emph{then} $n_2$ correct processes to echo $m_2$.

Moreover, the probability of any correct process being E-ready for $m_2$ is an increasing function of $n_2$. Given that at least one correct process is E-ready for $m_1$, the probability of compromising early consistency is maximized by the adversary that maximizes $n_2$.

Therefore, the probability of compromising early consistency is maximized by an adversary that causes one correct process at a time to echo $m_1$, until at least one correct process is E-ready for $m_1$, then causes all the other correct processes to echo $m_2$.

An adversary could release more than two different messages. We argue, however, that the adversary maximizes the probability of violating early consistency using only two messages. If the adversary needs to make one process E-ready for $m_1$ and another process E-ready for $m_2$, it needs enough echoes for $m_1$ as well as enough choes for $m_2$. The more echoes for each respective message, the higher the chance of a process becoming E-ready for it. Intuitively, introducing a third message would only decrease the chance of a correct process becoming E-ready for either of $m_1$ and $m_2$.

\paragraph{Feedback creation} We compute an upper bound for the probability of $m$ being delivered by any correct process, given that no correct process is E-ready for $m$.

In order to do so, we consider a scenario where every Byzantine process sends a {\tt Ready} message for $m$ to every correct subscriber. Noting that, in this scenario, a Byzantine process behaves identically to a correct process that is E-ready for $m$, we compute the probability underlying the number of correct processes that are ready for $m$ by playing a game of \pebbling\ where also the Byzantine processes are included as initially infected nodes.

We then use the upper bound on the delivery probability to bound the probability of $m$ being delivered.

\subsection{Validity}
\label{subsection:pdevalidity}

We compute an upper bound for the probability of \textbf{validity} being compromised.

Validity is compromised if a correct sender broadcasts, but does not deliver, a message $m$. We compute an upper bound for the probability of this happening by assuming that, if the totality of \broadcast\ is compromised, then the validity of \pde\ is compromised as well.

If the totality of \broadcast\ is not compromised, then every correct process publishes an {\tt Echo} message for $m$. We use our lower bound on the E-ready probability to compute the probability distribution underlying the number of correct processes that are E-ready for $m$.

We then play a game of \pebbling\ to compute the probability distribution underlying the number of correct processes that are ready for $m$.

Finally, we use our lower bound for the delivery probability to bound the probability that a correct process (and, specificaly, the sender) will not deliver $m$.

\subsection{Totality}
\label{subsection:pdetotality}

We compute an upper bound for the probability of \textbf{totality} being compromised.

Totality is compromised if at least one, but not all processes deliver a message. We compute an upper bound for the probability of this happening by assuming that totality is compromised if early consistency is compromised.

If early consistency is not compromised, then a Byzantine sender will cause processes to be E-ready for at most one message $m$.

We consider a scenario where a Byzantine adversary can:
\begin{itemize}
    \item Cause any correct process to be E-ready for $m$.
    \item Determine if any correct process delivered $m$.
\end{itemize}

Under Assumption \ref{ergpdeassumption:anonymity}, the adversary has no knowledge of neither the ready nor the delivery sample of any correct process. Therefore, the adversary has no way of meaningfully distinguishing two correct processes, based on the effect that their {\tt Ready} messages will have on the system.

Let $n$ represent the number of correct processes that are E-ready for $m$. The number of processes that deliver $m$ is a non decreasing function of $n$. Therefore, the probability of compromising totality, given that at least one correct process delivered $m$, is maximized by the adversary that minimizes $n$.

The adversary that has the highest probability of compromising totality, therefore, will cause one correct process at a time to be E-ready for $m$ until at least one correct process delivers $m$. Totality is compromised if at least one correct process does not deliver $m$.

We model the above using a multi-round game of \pebbling, where the player infects one more uninfected node at the beginning of each round. At the end of each round, we use both our bounds for the delivery probability to determine whether or not totality can be compromised at that round.

\subsection{Consistency}
\label{subsection:pdeconsistency}

We compute an upper bound for the probability of \textbf{consistency} being compromised.

Consistency is compromised if two correct processes deliver two conflicting messages, $m$ and $m'$. We compute an upper bound for the probability of this happening by assuming that consistency is compromised if early consistency is compromised.

If early consistency is not compromised, but both $m$ and $m'$ are delivered by at least one correct process, then either $m$ or $m'$ is delivered without any correct process being E-ready for it. We bound the probability of this happening using our result on feedback creation.

\section{Analysis of \tfayto}
\label{appendix:tfaytoanalysis}

We now discuss the correctness of \tfayto.

\paragraph{No creation} A correct process delivers a message only if it was previously \singleshotinstance.Delivered. Moreover, \singleshotinstance.Broadcast is invoked by the sender process only upon \multishotinstance.Broadcast. 

Since \singleshot\ satisfies \textbf{no duplication}, each instance of \singleshotinstance\ will deliver at most once. Since \singleshot\ also satisfies \textbf{integrity}, no correct process will \singleshotinstance.Deliver without a corresponding invocation of \singleshotinstance.Broadcast on the side of the sender.
    
Therefore, if the sender never broadcasts more than $n$ messages, then no correct process will deliver more than $n$ messages.

\paragraph{Integrity} A correct process delivers a message only if it was previously \singleshotinstance.Delivered. Moreover, \singleshotinstance.Broadcast is invoked by the sender process only upon \multishotinstance.Broadcast. 
    
Since \singleshot\ satisfies \textbf{integrity}, then no correct process will deliver a message $m$, unless $m$ was previously broadcast by the sender.

\subsection{Multi-validity}

\Multishot\ satisfies $\epsilon$-\textbf{multi-validity} if the underlying abstraction of \singleshot\ satisfies $\epsilon$-\textbf{validity}.

If the sender $\sigma$ is correct, and it initially broadcasts $m_1, \ldots, m_n$, then $\sigma$ \singleshotinstance.$(i - 1)$.Broadcasts $m_i$ for every $i \in \{1, \ldots, m_i\}$

If \singleshot\ satisfies $\epsilon$-validity, then each $m_i$ is eventually \singleshotinstance.Delivered by $\sigma$ with probability at least $(1 - \epsilon)$. Therefore, $m_1, \ldots, m_n$ are eventually \singleshotinstance.Delivered with probability at least $(1 - \epsilon)^n$.

Upon \singleshotinstance.Deliver of $m_i$, $messages[i - 1]$ is set to $m_i$. When all $m_1, \ldots, m_n$ are \singleshotinstance.Delivered, the first $n$ entries of $messages$ are updated to a value different than $\bot$. As a result, $m_1, \ldots, m_n$ are delivered in sequence.

Consequently, the sender process delivers $m_1, \ldots, m_n$ with probability at least $(1 - \epsilon)^n$.

\subsection{Multi-totality}

\Multishot\ satisfies $\epsilon$-\textbf{multi-totality} if the underlying abstraction of \singleshot\ satisfies $\epsilon$-\textbf{totality}.

Let $\pi$ be a correct process. If $\pi$ delivered $n$ messages, then $messages[j] \neq \bot$ for all $j \in \{0, \ldots, n - 1\}$. Moreover, if $\pi$ is correct, then $messages[j]$ is set to a value other than $\bot$ only upon \singleshotinstance.$j$.Deliver.

Each instance of \singleshotinstance\ satisfies totality with a probability at least $(1 - \epsilon)$. Therefore, with probability at least $(1 - \epsilon)^n$, all \singleshotinstance.$j$ satisfy totality, and every correct process eventually sets $messages[j] \neq \bot$, with $j \in \{0, \ldots, n - 1\}$.

Consequently, every correct process delivers $n$ messages with probability at least $(1 - \epsilon)^n$.

\subsection{Multi-consistency}

\Multishot\ satisfies $\epsilon$-\textbf{multi-consistency} if the underlying abstraction of \singleshot\ satisfies $\epsilon$-\textbf{consistency}.

Let $\pi, \rho$ be two correct processes. If $\pi$ and $\rho$ initially delivered $m_1, \ldots, m_n$ and $m_1', \ldots, m_n'$ respectively, then, for every $i \in \{1, \ldots, n\}$, $\pi$ \singleshotinstance.$(i - 1)$.Delivered $m_i$, and $\rho$ \singleshotinstance.$(i - 1)$.Delivered $m_i'$.

Each instance of \singleshot\ satisfies consistency with probability at least $(1 - \epsilon)$. Therefore, with probability at least $(1 - \epsilon)^n$, \singleshotinstance.$j$ satisfies consistency for all $j \in \{0, \ldots, n - 1\}$, and $m_i = m_i'$ for all $i \in \{1, \ldots, n\}$.

Therefore, with probability at least $(1 - \epsilon)^n$, all correct processes that initially deliver $n$ messages deliver $m_1, \ldots, m_n$.
\end{appendices}

\end{document}